\theoremstyle{plain}
\newtheorem{thm}{Theorem}[section]
\theoremstyle{definition}
\newcommand{\bbR}{\mathbb{R}}
\newcommand{\cP}{\mathcal{P}}
\newcommand{\cS}{\mathcal{S}}
\newcommand{\supp}{\mathrm{supp}}
\begin{document}

\title{The effect of timescale separation on the tipping window for chaotically forced systems}

\author{Raphael R\"omer and Peter Ashwin\\
Department of Mathematics and Statistics,\\ University of Exeter, Exeter EX4 4QF, UK}

\maketitle
Author Accepted Manuscript. Accepted for publication in SIAM Journal on Applied Dynamical Systems (SIADS)

\begin{abstract}
Tipping behavior can occur when an equilibrium of a dynamical system loses stability in response to a slowly varying parameter crossing a bifurcation threshold, or where noise drives a system from one attractor to another, or some combination of these effects. Similar behavior can be expected when a multistable system is forced by a chaotic deterministic system rather than by noise. In this context, the chaotic tipping window was recently introduced and investigated for discrete-time dynamics. 
In this paper, we find tipping windows for continuous-time nonlinear systems forced by chaos. We characterize the tipping window in terms of forcing by unstable periodic orbits of the chaos, and we show how the location and structure of this window depend on the relative timescales between the forcing and the responding system. We illustrate this by finding tipping windows for two examples of coupled bistable ODEs forced with chaos. Additionally, we describe the dynamic tipping window in the setting of a changing system parameter.
\end{abstract}

\tableofcontents

\section{Introduction} \label{sec:Intro}

In many models of real-world systems, different processes can evolve over very different timescales. If one subsystem evolves chaotically on a rapid timescale and forces another slower subsystem, the faster subsystem can be seen as similar to a noise process driving the slower part. Conversely, if the forcing subsystem evolves much more slowly than the system it forces, then the forcing system can be seen as a slowly varying parameter of the forced subsystem.

Such timescale separations allow for substantial simplifications of models for processes in a variety of applications from ecology, neuroscience, and chemistry to economics and data science; see for example \cite{berglund2006noise,kuehn2015multiple}. It is also a common assumption in climate modeling, made explicit by Hasselmann \cite{Hasselmann1993optimal, Lucarini2023theoretical}, that weather processes are rapid and chaotic and that variability of climate variables (such as global mean surface temperature) occurs on a slower timescale driven by the weather fluctuations. This suggests that climate can be modeled using stochastic differential equations, where the chaotic driving is approximated by a white noise (Wiener process) driving. 

Recent progress has shown that this is indeed the case for many systems in an appropriate limit \cite{Kelly:2017, gottwald2017, Melbourne_2011}. In particular, if the fast dynamics are ergodic and exponentially rapidly mixing, homogenization theory \cite{Kelly:2017, Melbourne_2011} gives a setting where this approximation is correct in the limit of the chaotic forcing subsystem being infinitely fast compared to the forced subsystem. Away from this limit, modeling fast dynamics as a noise process can clearly lead to even qualitatively wrong results; for example, if the chaotic forcing is bounded, there are some transitions between attractors that are possible in stochastic differential equations but not for systems forced with bounded dynamics.

This qualitative difference between chaotic and stochastic forcing can be seen when considering a bistable system forced by either a Wiener process or bounded chaos. Initialized close to one attractor forcing with a Wiener process will almost surely lead to tipping to the basin of attraction of the other attractor in finite time, and the expected tipping time's asymptotics for small forcing strengths are given by the Eyring-Kramers formula \cite{berglund2006noise, Boerner2024}.
In contrast, bounded chaotic forcing does not necessarily lead to tipping at all. Even if the forced bistable system is close to a bifurcation, tipping will strongly depend on the realization of the chaotic forcing; some forcing trajectories might lead to tipping of the bistable system, but others might not.

This effect has been characterized as the appearance of a chaotic tipping window \cite{AshwinNewmanRoemer:2024} near tipping points of systems that are chaotically forced. The tipping window is the region in the parameter space of a forced system for which it tips for some specific trajectory of a chaotic forcing but not for others. Outside the chaotic tipping window, the system will either never tip or always tip independently of the specific forcing trajectory on the bounded chaotic attractor of the forcing system.
Note that the entrance into the chaotic tipping window can also be seen from the perspective of the full system as a loss of stability of an attractor as a result of a crisis.
The detailed example in \cite{AshwinNewmanRoemer:2024} considered discrete time and a particularly simple chaotic forcing such that many results could be rigorously proven.
In this paper, we consider the case of continuous-time ordinary differential equations (ODEs) and analyze how varying the relative timescales between the forcing and forced subsystem affects the chaotic tipping windows. 

In Section~\ref{sec:TippingWindow}, we first generalize the notion of the tipping window for an additively chaotically forced system independent of relative timescales. Then, we derive analytical results in limiting cases of infinite timescale-separation in Section \ref{sec:inf_ts_sep} for a one-dimensional additively forced system; in particular, we show that the extrema of the forcing are important to determine tipping in the case of slow forcing, while the mean of the forcing is important to determine tipping for the case of rapid periodic forcing.

In Section \ref{sec:Example}, we verify these results for two additively, chaotically forced bistable ODE examples. The first system consists of dynamics in a double-well potential, with its bifurcation parameter additively forced by the first variable of the Lorenz-63 system in the chaotic regime. The second is a Stommel system similarly forced by Lorenz-63. We compute the tipping window of these systems when changing the relative speed between forcing and response by examining the response of the bistable systems to unstable periodic orbits (UPOs) on the chaotic attractor of the forcing system. We verify that in the limiting cases of timescale separation, indeed, the mean and extrema of the forcing UPOs allow us to determine the tipping window. For cases of intermediate relative timescales, there is a nontrivial rearrangement of the UPOs that determines the boundaries of the tipping window. We examine this for a specific example and find that there is a balance between the relative timescales and the duration that the UPOs can overshoot a bifurcation point.

In Section~\ref{sec:DynamicTippingWindow}, we turn to the {\it dynamic tipping window} also introduced in \cite{AshwinNewmanRoemer:2024}. This is a non-autonomous version of the tipping window, as the bifurcation parameter is now ramped at a finite rate. In the limit of infinitely slow parameter ramping, it limits to the tipping window of the autonomous case. We end with a brief discussion in Section~\ref{sec:discuss}.

\subsection{Tipping points and chaotic forcing timescales}
\label{sec:ChaoticForcingAndTippingPoints}

A {\em tipping event} occurs when a nonlinear system has a large or rapid change in state in response to small or slow changes in forcing. These events can happen in nonlinear systems as a result of various mechanisms \cite{Ashwinetal:2012,Ashwinetal:2017}. \emph{Bifurcation tipping} (B-tipping) occurs when a system parameter slowly crosses a bifurcation point that results in the loss of any nearby attractor. \emph{Rate-induced tipping} (R-tipping) can occur when parameter changes are fast enough that the adiabatic approximation breaks down.  \emph{Noise-induced tipping} (N-tipping) is another tipping mechanism that is usually studied in systems driven by Gaussian noise (i.e. a Wiener process), where the unboundedness of the noise means that in the presence of two or more attractors, there will almost surely be tipping between them on waiting a sufficiently long time. 
Non-trivial dynamics on the attractor may bring new effects such as \emph{partial tipping} \cite{Alkhayuon:2018,ashwin2021physical}, \emph{phase tipping} \cite{alkhayuon2021phase}. In the presence of chaotic forcing, the \emph{chaotic} and \emph{dynamic tipping windows} \cite{AshwinNewmanRoemer:2024} can appear.

Let us consider a forced nonlinear system in the timescale $t$ of the response, inspired by \cite{AshwinNewmanRoemer:2024}, where the forcing and response timescales have a ratio $\gamma>0$, namely
\begin{equation}
	\begin{aligned}  \label{eq:fullSyst}
		\frac{dx}{dt} = & f(x,\eta(t), a \phi(y)) \\
		\frac{dy}{dt} = &\gamma g(y) 
	\end{aligned}
\end{equation}
with $x\in\bbR^d$, $y\in\bbR^n$, $\eta:\bbR\rightarrow \bbR$, $\phi:\bbR^n\rightarrow \bbR$, $g:\bbR^n\rightarrow \bbR^n$, $f:\bbR^d \times \bbR\times\bbR\rightarrow \bbR^d$, and $a\geq 0$. We assume that $f$, $\eta$, $\phi$, and $g$ are smooth functions of their arguments. We assume that $f$ has B-tipping points of equilibria for the case $a=0$ and slow variation of $\eta$. The parameter $a$ scales the amplitude of the forcing $\phi(y)$, which is assumed to be bounded. We assume that the $y$ dynamics has a physical ergodic invariant measure $m$ (i.e., a positive Lebesgue measure set of initial conditions has empirical measures that converge to $m$ \cite{Young:2016}). We suppose that the initial condition $y(0)$ is typical with respect to this physical measure. Often, we consider {\em chaotic forcing}, i.e., the physical measure $m$ is supported on a chaotic attractor. If we write the timescale of the forcing as $\tau= \gamma t$, then we can express (\ref{eq:fullSyst}) as
\begin{equation}
    \begin{aligned}  \label{eq:fullSysttau}
    \frac{dx}{d\tau} = &\ \gamma^{-1}f(x,\eta(\gamma^{-1} \tau),a \phi(y)) \\
    \frac{dy}{d\tau} = &\ g(y). 
\end{aligned}
\end{equation}
The function $\eta(t)$ describes the value of the system parameter for any time $t$. If $\eta(t)$ is assumed to change on a time scale much slower than either $x$ or $y$, then the dynamics of (\ref{eq:fullSyst}) can be largely understood in terms of the following {\em frozen system with chaotic forcing} (in terms of the response timescale) by fixing $\eta$:
\begin{equation}
    \begin{aligned}  \label{eq:frozenSyst}
    \frac{dx}{dt} = & f(x,\eta, a \phi(y)) \\
    \frac{dy}{dt} = &\gamma g(y). 
\end{aligned}
\end{equation}
We restrict (\ref{eq:frozenSyst}) to a special case of an additively forced equation for $x\in\bbR$ given by
\begin{equation}
    \begin{aligned}  \label{eq:frozenSystf0}
    \frac{dx}{dt} = & f(x) + \eta + a \phi(y) \\
    \frac{dy}{dt} = &\gamma g(y). 
\end{aligned}
\end{equation}
where $\frac{dx}{dt}=f(x)+\eta$ is bistable for some values of $\eta$. The system (\ref{eq:frozenSystf0}) is equivalent to
\begin{align}
\frac{dx}{dt}&=f(x)+\eta+a\phi(y(\gamma t)) \label{eq:addforcedSyst}\\
     \frac{dy}{d\tau}&=g(y). \label{eq:addchaoticSyst}
\end{align}

We discuss in the next Section how a bifurcation for (\ref{eq:frozenSystf0}) with $a=0$ can become a tipping window for $a>0$ and how $\gamma$ can influence the form of this tipping window.

\section{The chaotic tipping window}
\label{sec:TippingWindow}

For some forcing amplitude $a>0$, the tipping window $W(a)$ characterizes the set of parameters $\eta$ for which the system (\ref{eq:frozenSystf0}) undergoes tipping events for some realizations of the forcing but not for others. This concept was introduced in \cite{AshwinNewmanRoemer:2024} and investigated there in the context of discrete dynamical systems. 
Now, suppose that the unforced equation (often called the {\em response system})
\begin{align} \label{eq:addunforcedSyst}
    \frac{dx}{dt} = &\ f(x)+\eta
\end{align}
has a bifurcation point $\eta^*$ that can cause B-tipping. Specifically, suppose there are two attractors for some range $\eta^{\dag}<\eta<\eta^*$ and only one attractor for $\eta>\eta^*$, and that one of the attractors disappears at a saddle-node bifurcation $\eta=\eta^*$.

Consider a small but positive forcing strength $a>0$ for (\ref{eq:addforcedSyst}). In this context, the \emph{chaotic tipping window} about $\eta^*$ will be an interval $W(a)$ such that whenever $\eta \notin W(a)$, the Equation (\ref{eq:addforcedSyst}) has either one, or two attractors independent of the chosen forcing solution $y(\tau)$ of (\ref{eq:addchaoticSyst}), and such that
$$
\lim_{a\rightarrow 0} \sup\{d(x,\eta^*)~:~x\in W(a)\}=0,
$$
i.e. such that both endpoints of the interval $W(a)$ limit to $\eta^*$ for $a\rightarrow0$. Note that if the $y$-dynamics are not chaotic, but $y$ is a critical point of Equation \eqref{eq:addchaoticSyst}, the tipping window is a degenerate interval, i.e., a point.

Within the chaotic tipping window, i.e., for $\eta \in W(a)$, the dynamics of the full system (\ref{eq:addforcedSyst} \& \ref{eq:addchaoticSyst}) will depend on the specific trajectory of the chaotic forcing (\ref{eq:addchaoticSyst}). For some forcing trajectories of the chaotic system (\ref{eq:addchaoticSyst}), there is only one attractor for (\ref{eq:addforcedSyst}): the chaotically forced system tips earlier than the unforced equation (\ref{eq:addunforcedSyst}) would have tipped as a result of adiabatically increasing $\eta$.  
Other trajectories of the chaotic system (\ref{eq:addchaoticSyst}) will give more than one attractor for the equation (\ref{eq:addforcedSyst}): the chaotically forced system tips later than the unforced equation (\ref{eq:addunforcedSyst}) would have tipped as a result of adiabatically increasing $\eta$. 

The chaotic tipping window was introduced in \cite{AshwinNewmanRoemer:2024}, and we follow this approach closely. However, here we slightly modify the previous definition and define the {\em set of tipping windows} $W(a)$ as an intersection of two closed sets. As $W(a)$ is typically just a single interval, i.e. the set of tipping windows has just a single element, we use the terms {\em tipping window} and the {\em set of tipping windows} interchangeably here, but always mean the definition of $W(a)$ below. 

Consider the set $\cS(m)$ of all ergodic invariant measures for the $y$-dynamics that are supported within $\supp(m)$, the support of the physical ergodic invariant measure $m$. We write $\cP(m)$ to denote the unstable periodic orbits contained within $\supp(m)$; clearly (if we identify a periodic orbit with an ergodic measure supported on it) $\cP(m)\subset \cS(m)$. Moreover, for many chaotic systems, $\cP(m)$ is dense within $\cS(m)$ in the C* topology (see e.g. \cite{sigmund1972space,yuan1999}).  We introduce the notion that the system (\ref{eq:frozenSystf0}) has one attractor when forced by $s\in\cS(m)$ if there is a full measure set of trajectories $y$ w.r.t. $s$ such that (\ref{eq:frozenSystf0}) has only one attractor for $x$. Inspired by \cite{AshwinNewmanRoemer:2024}, for fixed parameters we define the set
$$
M(s,a):=\{\eta~:~ \mbox{(\ref{eq:frozenSystf0}) has one attractor for given $a$ when forced by } s \in \cS(m)\},
$$
and the set of chaotic tipping windows as
$$
W(a):=\{\eta~:~\eta\in \overline{M(s,a)}\cap \overline{M^c(s',a)}~\mbox{ for some }s,s'\in \cS(m)\}.
$$
As \cite{AshwinNewmanRoemer:2024} notes, the elements $W(a)$, i.e. the single tipping windows, correspond to those parameter values where different forcing trajectories may give a different number of attractors.
In the case where $M(s,a)$ is a semi-infinite interval $(\eta_c(s,a),\infty)$ for all $s\in\cS(m)$ the chaotic tipping window will be an interval $W(a)=[\eta_-(a),\eta_+(a)]$. The endpoints
\begin{equation}
\begin{aligned}
\eta_-(a)=&\inf \{ \eta_c(s,a)~:~s\in\cS(m)\},\\\eta_+(a)=&\sup \{ \eta_c(s,a)~:~s\in\cS(m)\}
\end{aligned}
\label{eq:eta+-}
\end{equation}
of this window can be thought of as solving a problem of ergodic optimization. Measures realizing these extremes are often found to be periodic \cite{jenkinson2019ergodic}; this was the case in \cite{AshwinNewmanRoemer:2024}, but here we find that this is not always the case. In \cite{AshwinNewmanRoemer:2024}, it was found that $\eta_c(s,a)$ is not simply an averaged observable but the boundary of a region of multistability, and it is not obvious how to generalize these results. The threshold on entry to the chaotic tipping window can also be understood in terms of a \emph{non-autonomous saddle-node bifurcation}~\cite{Anagnostopoulou:2012} of the non-autonomous system (\ref{eq:frozenSystf0}).

\subsection{Tipping windows at the limits of timescale separation}
\label{sec:inf_ts_sep}

In this section, we argue that for limiting timescale separation $\gamma\rightarrow0$, only the maximum, and for $\gamma\rightarrow\infty$, only the mean of a given forcing trajectory are important for understanding the chaotic tipping window.

For definiteness, we consider the case where Equation \eqref{eq:addunforcedSyst} only has equilibrium attractors and a hysteresis loop on increasing $\eta$. More precisely, we assume that there is bistability for $\eta^\dag<\eta<\eta^*$ with saddle-node bifurcations at $(\eta,X)=(\eta^*,X^*)$ and $(\eta^\dag,X^\dag)$. We assume that for $\eta\notin \{\eta^\dag,\eta^*\}$ all equilibria are hyperbolic, and that there is a branch of attractors $X(\eta)$ such that $\lim_{\eta\nearrow\eta^*}X(\eta)=X^*$.

We turn to cases with \ $a$ \ small and \ $a>0$ in \eqref{eq:frozenSystf0}. Recall that $\tau$ is the timescale of the forcing $y(\tau)$, $t$ is the timescale of the response $x(t)$, and $\tau = \gamma t$. 
For any trajectory $y(\tau)$, we write
$$
\widetilde{\Phi}(\tau):=\phi(y(\tau)),
$$
and define
$$
\overline{\Phi}:=\lim_{T\rightarrow \infty} \frac{1}{T}\int_{\tau=0}^T \phi(y(\tau))\, d\tau
$$
(so that $\widetilde{\Phi}(\tau)=\overline{\Phi}+\Phi(\tau)$ with $\Phi(\tau)$ having zero mean), and define 
$$
\Phi^+:=\sup_{\tau>0} \phi(y(\tau)).
$$
We say $y$ is {\em generic} for $s\in\cS(m)$ if
$$
\begin{aligned}
&\overline{\Phi}=\overline{\Phi}(s):=\int_{y\in\supp(s)} \phi(y)\, ds(y) ~\mbox{ and}\\
&\Phi^+=\Phi^+(s):=\sup_{y\in\supp(s)} \phi(y).
\end{aligned}
$$

For small $\gamma$, the forcing in the system \eqref{eq:frozenSystf0} evolves slowly compared to the unforced equation (or response system) (\ref{eq:addunforcedSyst}), and thus, the latter reacts adiabatically to the forcing.
In the limit $\gamma \rightarrow 0$, we show that one of the response system’s attractors loses stability if $\eta+a \phi(y(\tau))$ exceeds $\eta^*$ at any time, i.e. if $y$ is generic for $s\in\cS(m)$ then the critical $\eta$ is at:
$$
\eta_c(s,a)=\eta^*- a \Phi^+(s).
$$
Hence, lower and upper bounds of the chaotic tipping window (\ref{eq:eta+-}) in the limit $\gamma\rightarrow 0$ can be given by
\begin{equation}
\begin{aligned} \label{eq:limsTipWin_gamma0}
    \eta_-(a)=&\eta^*-a\sup\left\{ \Phi^+(s) ~:~s\in\cS(m)\right\},\\
    \eta_+(a)=&\eta^*-a\inf\left\{ \Phi^+(s)~:~s\in\cS(m)\right\}.
\end{aligned}
\end{equation}
In the opposite case, where $\gamma$ is large, the forcing evolves rapidly compared to the system of interest, and the latter only ``sees'' the running mean value of the forcing. For UPO forcing, i.e. $p \in \cP(m)$, in the limit $\gamma \rightarrow \infty$, one of the response system’s attractors loses stability if $\eta+a\phi(y(\tau))$ exceeds $\eta^*$ on average.
In the case $\gamma \rightarrow \infty$, we show that, if $\cP(m)$ is dense in $\cS(m)$, the lower and upper bounds of the chaotic tipping window can be given by the trajectories $p \in \cP(m)$ whose mean values are maximal or minimal:
\begin{equation}
\begin{aligned} \label{eq:limsTipWin_gammaInf}
    \eta_-(a)=&\eta^*-a\sup \left\{\overline{\Phi}(p)~:~ p\in\cP(m)\right\},~~~\\
    \eta_+(a)=&\eta^*-a\inf \left\{\overline{\Phi}(p)~:~ p\in\cP(m)\right\}.
\end{aligned}
\end{equation}
For intermediate timescale ratio $\gamma$, deriving analytical results for the location of the attractor crisis seems highly non-trivial, hence we explore this numerically in Section~\ref{sec:Example}. 

\begin{thm}\label{thm:limits}
    Suppose that (\ref{eq:addforcedSyst}) with $a=0$ 
    has bistability for $\eta^\dag<\eta<\eta^*$ with saddle-node bifurcations at $(\eta,X)=(\eta^*,X^*)$ and $(\eta^\dag,X^\dag)$. Assume that all limit points away from these saddle-node bifurcation points are hyperbolic equilibria. Suppose that $m$ is a physical measure and $y(\tau)$ is a solution of (\ref{eq:addchaoticSyst}) contained within $\supp(m)$. 
	\begin{itemize}
        \item[(a)] For sufficiently small $a>0$ and $y$ generic for $s\in\cS(m)$, in the limit $\gamma\rightarrow 0$ one of the attractors of the response system, Equation(\ref{eq:addforcedSyst}), loses stability at
        $$
        \eta_c=\eta^*-a \Phi^+(s).
        $$
        In particular, (\ref{eq:limsTipWin_gamma0}) gives the limits of a chaotic tipping window in the limit $\gamma\rightarrow 0$.
        \item[(b)] For sufficiently small $a>0$ and $y$ generic for $p\in\cP(m)$, in the limit $\gamma \rightarrow \infty$ one of the attractors of the response system, Equation (\ref{eq:addforcedSyst}), loses stability at
        $$
        \eta_c=\eta^*-a \overline{\Phi}(p).
        $$
        If periodic measures $\cP(m)$ are dense in $\cS(m)$, then \eqref{eq:limsTipWin_gammaInf}  gives the limits of a chaotic tipping window for $\gamma\rightarrow\infty$. 
    \end{itemize}        
 \end{thm}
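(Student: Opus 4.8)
The overall plan is to treat each limit as a singular-perturbation problem for the slow--fast pair \eqref{eq:addforcedSyst}--\eqref{eq:addchaoticSyst} and to reduce the persistence of the attractor on the branch $X(\eta)$ to a saddle--node condition for an autonomous \emph{effective} one-dimensional equation. For part (a) I would write the system in the forcing timescale $\tau=\gamma t$, so that $\frac{dx}{d\tau}=\gamma^{-1}\big(f(x)+\eta+a\widetilde\Phi(\tau)\big)$ with $\frac{dy}{d\tau}=g(y)$, whence $x$ becomes infinitely fast as $\gamma\to0$. First I would invoke geometric singular perturbation theory (Fenichel): away from the fold, $x=X(\eta+a\widetilde\Phi(\tau))$ is a normally hyperbolic attracting critical manifold, and for small $\gamma$ the response $x(t)$ tracks it exponentially closely. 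Hence this attractor persists for all time precisely when the slowly varying effective parameter $\eta+a\widetilde\Phi(\tau)$ never reaches the fold value $\eta^*$; using genericity of $y$ for $s$, i.e. $\sup_\tau\widetilde\Phi(\tau)=\Phi^+(s)$, this condition is $\eta+a\Phi^+(s)<\eta^*$, giving $\eta_c=\eta^*-a\Phi^+(s)$. The step needing care is the converse: when $\eta>\eta_c$ the effective parameter exceeds $\eta^*$ on a (possibly short) interval, and I must show that slow passage through the saddle--node forces an \emph{irreversible escape} to the other branch. I would use the standard estimate that the delay in slow passage through a non-degenerate saddle--node scales with a positive power of $\gamma$ and hence vanishes as $\gamma\to0$, together with the fact that the bounded forcing only returns below threshold after the trajectory has crossed the vanished saddle, so re-capture is impossible. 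Optimizing $\eta_c(s,a)$ over $s\in\cS(m)$ then yields \eqref{eq:limsTipWin_gamma0}.

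For part (b) the forcing is fast, so I would keep the response timescale $t$ and exploit that along a UPO $p$ the forcing $a\widetilde\Phi(\gamma t)$ is periodic with period $T/\gamma\to0$; here the genericity of $y$ for $p$ is automatic since $y$ lies on the orbit and the time average equals $\overline\Phi(p)$. The plan is to apply the method of averaging to the slow variable $x$: over one fast period $x$ moves by $O(1/\gamma)$, and the averaged equation is simply $\frac{dx}{dt}=f(x)+\eta+a\overline\Phi(p)$, i.e. the unforced response system \eqref{eq:addunforcedSyst} with the parameter shifted by the constant $a\overline\Phi(p)$. This averaged system inherits the saddle--node at $\eta+a\overline\Phi(p)=\eta^*$, so it possesses a hyperbolic stable equilibrium on the branch $X(\cdot)$ exactly for $\eta<\eta_c:=\eta^*-a\overline\Phi(p)$. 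I would then upgrade this to the full periodically forced system by persistence of normally hyperbolic invariant sets: for each fixed $\eta<\eta_c$ and $\gamma$ large the hyperbolic equilibrium perturbs to a nearby attracting invariant set, whereas for $\eta>\eta_c$ it has disappeared and with it the attractor. Hence $\eta_c(p,a)=\eta^*-a\overline\Phi(p)$, and if $\cP(m)$ is dense in $\cS(m)$ the extreme means over $\cP(m)$ give the endpoints \eqref{eq:limsTipWin_gammaInf}.

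The principal obstacle in both parts is the same: the standard Fenichel and averaging statements control trajectories only on finite time intervals, with errors that degrade as $\eta\to\eta_c$ where normal hyperbolicity is lost at the fold, yet the conclusion concerns the infinite-time existence of an attractor. I would address this by fixing $\eta$ strictly on one side of $\eta_c$, so that the effective equilibrium is uniformly hyperbolic with a spectral gap bounded away from zero, and only then letting $\gamma\to0$ (respectively $\gamma\to\infty$); the fold itself, and the marginal case $\eta=\eta_c$ where $\eta+a\widetilde\Phi$ is tangent to $\eta^*$, is handled separately via the slow-passage and shrinking-basin estimates above, and this is where I expect the most delicate analysis.
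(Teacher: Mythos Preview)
Your proposal is correct and follows essentially the same route as the paper: geometric singular perturbation (Fenichel) for the slow-forcing limit in~(a), and averaging over a fast periodic forcing for~(b), reducing in each case to the autonomous equation $\dot x=f(x)+\eta_{\mathrm{eff}}$ with $\eta_{\mathrm{eff}}=\eta+a\Phi^+(s)$ or $\eta+a\overline\Phi(p)$ respectively.

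One difference worth noting for~(b): you worry that averaging and normal-hyperbolicity persistence only control finite time intervals, and propose to recover the infinite-time attractor statement by fixing $\eta$ away from $\eta_c$ so as to retain a uniform spectral gap. The paper sidesteps this by invoking the structural-stability form of the averaging theorem (Guckenheimer--Holmes, Theorem~4.4.1): for $x\in\bbR$ the hyperbolic equilibria of the averaged system automatically have transverse invariant-manifold intersections, so for $\gamma$ large the period map of the full periodically forced equation is topologically conjugate to that of the averaged one. This gives equality of the number of attractors directly, without a separate finite-to-infinite-time argument. Your route would work too, but the cited theorem packages exactly the step you flagged as delicate. For~(a) the paper is in fact terser than you are about the converse (irreversible escape once the effective parameter touches the fold), so your slow-passage discussion is a genuine strengthening rather than a detour.
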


\begin{proof}
Recall that $x\in\bbR$ and note that Equation (\ref{eq:addforcedSyst}) can be written as
\begin{equation}\label{eq:proofEq1}
    \begin{aligned}
    \frac{dx}{dt} = & f(x)+\eta + a \widetilde{\Phi}\left(\gamma t \right).
\end{aligned}
\end{equation}
For (a), we can view Equation \eqref{eq:proofEq1} as the first component of a two timescale system
\begin{equation} \label{eq:proofEq2}
    \begin{aligned}
    \frac{dx}{dt} = & f(x)+\eta + a \widetilde{\Phi}\left(s \right)\\
    \frac{ds}{dt} = & \gamma
\end{aligned}
\end{equation}
which, in the singular limit $\gamma\rightarrow 0$ will have a critical set $\eta+a\widetilde{\Phi}(s)=-f(x)$ with fold at $\eta+a\widetilde{\Phi}(s)=\eta^*$. If $\eta^{\dag}<\eta+a\widetilde{\Phi}(s)<\eta^*$ for all $s$ (which is the case for some choices of $\eta$ if $a$ is small enough), then the first equation of the system \eqref{eq:proofEq2} (i.e. the fast subsystem) will be linearly stable. At $\eta+a\Phi^+=\eta^*$, the fast subsystem loses stability, hence the first part of the stated result. The greatest and least values of $\eta_c$ are then given by considering all possible invariant measures $s\in\cS(m)$, hence the second part of the stated result.

For (b), we use a corollary building on the Averaging Theorem as stated in \cite[Theorem 4.4.1]{GuckenheimerHolmes1983} (see also \cite[Chapter 9.6]{kuehn2015multiple}). With $\tau=\gamma t$, Equation \eqref{eq:addforcedSyst} can be written as
\begin{align}
	\frac{dx}{d\tau} = &\ \gamma^{-1}\left(f(x)+\eta+a \phi(y(\tau))\right).
\end{align}
For a periodic $y(\tau)$ with period $T$ that is generic for $p\in\cP(m)$ we can write the forcing $\overline{\phi}(y(\tau))$ as mean $\overline{\Phi}$ plus a mean-zero period-$T$ oscillation $\Phi(\tau)$ around $\overline{\Phi}$. We write
\begin{align}
	\frac{dx}{d\tau} = &\ \gamma^{-1}\left(f(x)+\eta+ a \overline{\Phi} + a\Phi(\tau)\right).
\end{align}
Writing $\eta_a:=\eta+a\overline{\Phi}$ we have
\begin{align} \label{eq:AverTheorNonAutoEq}
	\frac{dx}{d\tau} = &\ \gamma^{-1}\left(f(x)+\eta_a + a\Phi(\tau)\right).
\end{align}
We define the associated autonomous averaged system as
\begin{align} \label{eq:AssAverAutoSys}
	\frac{dz}{d\tau} = &\ \gamma^{-1}\frac{1}{T}\int_0^T f(z)+\eta_a + a\Phi(\tau)\ d\tau = \gamma^{-1}\left(f(z)+\eta_a\right).
\end{align}
We write $\epsilon = 1/\gamma$ and note that for 
$\eta\not \in \{\eta^\dag,\eta^*\}$ all limit sets are hyperbolic equilibria with transverse intersections of invariant manifolds (this is automatic for $x\in\bbR$) so by \cite[Theorem 4.4.1]{GuckenheimerHolmes1983}, if $\gamma$ is large enough then the Poincare map of \eqref{eq:AverTheorNonAutoEq} and \eqref{eq:AssAverAutoSys} are topologically equivalent. In particular, the number of attractors of the two systems are the same for large enough $\gamma$. 
Hence at $\eta_a=\eta_*$, for small $a$ and in the limit $\gamma\rightarrow \infty$, the number of attractors of \eqref{eq:AverTheorNonAutoEq} changes from two to one on passing through $\eta_c=\eta^*-a \overline{\Phi}$. Hence the first part of the result (b). The second part follows from the assumed density of $\cP(m)$ in $\cS(m)$.
\end{proof}

\section{Tipping windows for bistable ODEs forced with chaos}
\label{sec:Example}

In the previous section, we gave some results for a one-dimensional multistable equation forced by a chaotic system in the limit of infinite timescale separation. We now consider two examples to verify the results above and to extend them to cases of an arbitrary timescale ratio.

\subsection{Example I: 1D-double-well system forced by Lorenz-63}
\label{sec:DoubleWellLorenz}

We consider dynamics in a one-dimensional double-well potential driven by trajectories on the Lorenz-63 attractor. The system is of the form (\ref{eq:frozenSystf0}) with $x\in\bbR$ and $y\in \bbR^3$,

\begin{equation} \label{eq:LorDoubleWell}
\left.\begin{aligned}
    \frac{d}{dt}{x} &= f(x)+ \eta + a \phi(y)  \\
    \frac{d}{dt}{y} &= \gamma g(y)
    \end{aligned}\right\}
\end{equation}
with $f(x)$ given by the dynamics in a double well potential
\begin{equation} \label{eq:doubleWell}
    f(x)= 3x-x^3,
\end{equation}
and $\phi(y)=y_1$ defined as the projection onto the first coordinate of the $y$-dynamics which are given by the Lorenz-63 system: 
\begin{equation}\label{eq:LorenzPartDoubleWell}
\left.\begin{aligned}
    g_1(y) &:= \sigma (y_2 - y_1) \\
    g_2(y) &:=y_1(\rho-y_3)-y_2 \\
    g_3(y) & := y_1 y_2 - \beta y_3.
\end{aligned}\right\}.
\end{equation}
For $\eta=0$, the unforced $x$-dynamics ($a=0$) has two stable equilibria in $x_{\pm}=\pm\sqrt{3}$, and an unstable equilibrium at $x_s=0$.

More generally, for $\eta\neq \pm 2$, the number and stability of the equilibria are as follows. For $\eta<-2$ there is one stable equilibrium $x_-(\eta)<-2$. For $-2<\eta<2$ there are two stable equilibria $x_+(\eta)\in(1,2)$ and $x_-(\eta)\in(-2,-1)$ separated by an unstable equilibrium $-1<x_s(\eta) <1$. For $\eta>2$ there is one stable equilibrium $x_+(\eta)>2$.
At $\eta^*=\pm 2$ there are saddle-node (fold) bifurcations where $x_{+}(\eta)$ or $x_{-}(\eta)$ meets $x_s(\eta)$.
Note that the location of the equilibria depends on $\eta$.

Turning to cases with $a>0$, the $y$-dynamics become important. Initializing the $x$-dynamics in $x_0<x_-(\eta)$ (i.e. in the basin of the attractor $x_-(\eta)$), setting $0<a\ll1$, and choosing different initial conditions of the $y$-dynamics close to the Lorenz-63 attractor, can show tipping to $x_+(\eta)$ depending on the exact initial condition of the forcing $y_0$, the value of $\eta$, and the value of $\gamma$. 

\begin{figure}
    \centering
    \includegraphics[align=t,width=12cm]{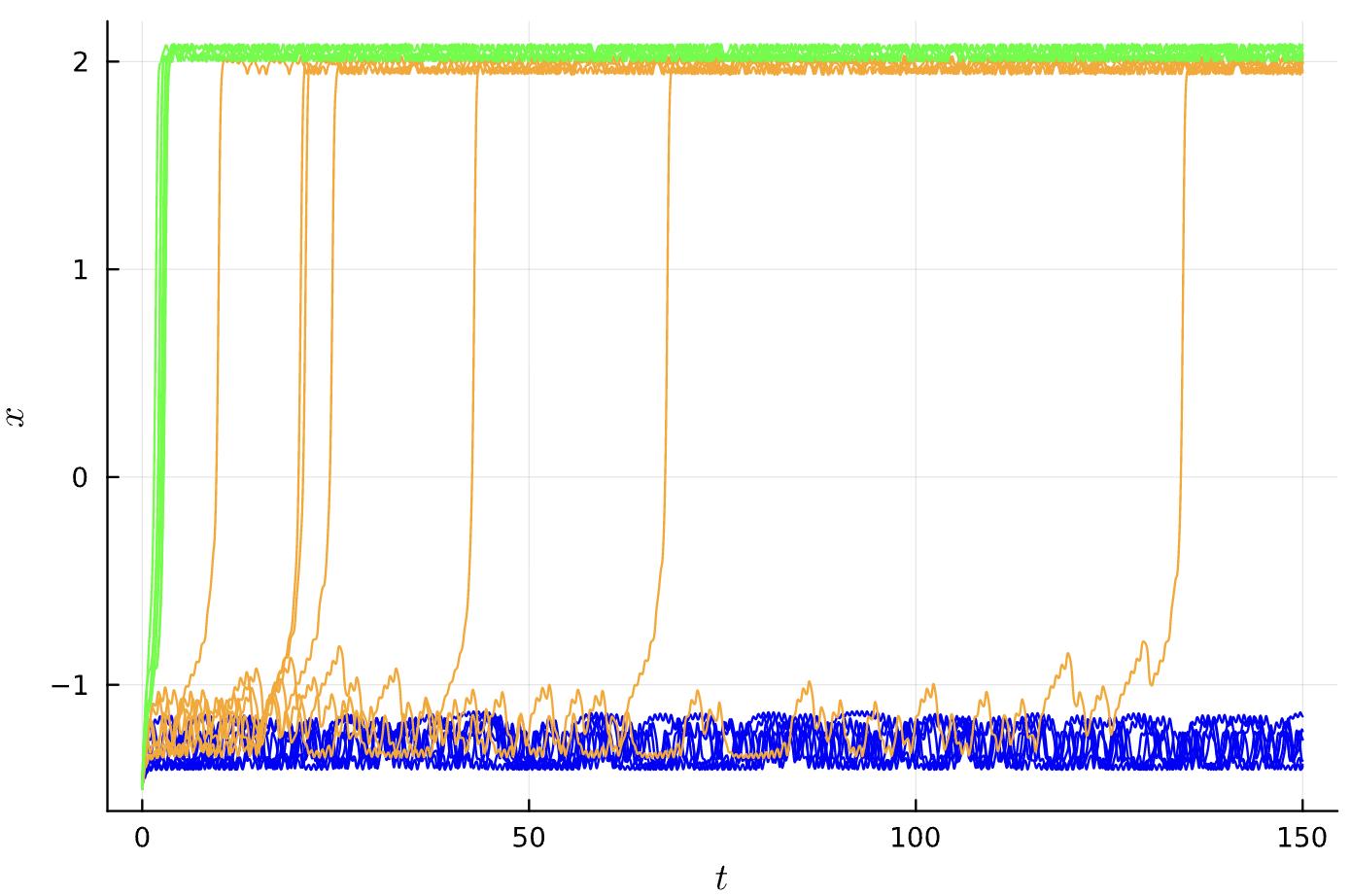}  
    \caption{Typical trajectories of the Lorenz-double-well system \eqref{eq:LorDoubleWell} with $a=0.03$, $\gamma=1$, randomly chosen initial conditions of the Lorenz-$y$-dynamics (i.e. typical with respect to the physical ergodic measure $m$ on the Lorenz-63 attractor), and three different values of $\eta$. The $x$-dynamics are initialized at $x=-1.5$ in the basin of attraction of the double-well attractor $x_-(\eta)$. The blue curves ($\eta=1.7$) remain in the vicinity of the attractor of the $x$-dynamics they started in. The orange curves ($\eta=1.85$) all leave the vicinity where they were initialized in (``they tip''), but the time of leaving this vicinity strongly depends on the specific $y$-trajectory that forces the $x$-dynamics. The green curves ($\eta=2.01$) always immediately leave towards the attractor $x_+(\eta)$, independently of the specific Lorenz trajectory. The $y$-trajectories are generated by taking the final state of the previous ensemble member as the initial condition for the next.}
    \label{fig:DW_typical}
\end{figure}

Figure \ref{fig:DW_typical} shows some typical trajectories for three different choices of $\eta$. They are ``typical'' in the sense that the initial conditions of the $y$ dynamics are chosen typically with respect to the natural ergodic measure $m$ on the chaotic attractor of the $y$ dynamics. The behavior of the orange trajectories suggests that $\eta=1.85$ lies within the tipping window defined in Section \ref{sec:TippingWindow}, since some trajectories immediately go to $x_+(\eta)$ (which we refer to as: ``they tip'') and others only much later within the integration time, depending on the exact trajectory of the $y$ dynamics. However, eventually all trajectories would go to $x_+(\eta)$ for one of the following reasons: (1) with probability 1, the randomly chosen $y$ initial conditions on the forcing attractor are not in the set of initial conditions for which the resulting $x$ trajectories would stay in the vicinity of the attractor $x_-(\eta)$, and upon waiting long enough the forcing trajectory would shadow an extreme forcing trajectory for long enough such that the system would tip. Or (2) with probability 0, the randomly chosen $y$ initial conditions on the forcing attractor are in the set of trajectories for which the resulting $x$ trajectories would never leave the vicinity of the attractor $x_-(\eta)$, but in this case any numerical error will, at some point, make the trajectory leave the possibly still existing measure zero set of remaining trajectories, so eventually the system would tip also in this case. 

All green trajectories go to $x_+(\eta)$ immediately, and no blue trajectories go to $x_+(\eta)$ within the given time, suggesting that the associated $\eta$-values are not in the tipping window. Computing the exact location and size of the tipping window is, however, far from trivial, as it involves exploring a large set of atypical trajectories on the Lorenz-63 attractor that yield the most extreme forcing behavior.

\subsubsection{Unstable Periodic Orbits}
A typical chaotic attractor will support an infinite number of unstable periodic orbits (UPOs) \cite{so1996detecting,bradley2002recurrence}, and the set of periodic orbits will be dense on the attractor \cite{sigmund1972space,yuan1999}. Thus, understanding the response of a system forced by UPOs on an attractor will help to understand the dynamics of that system forced by more general trajectories on the same attractor. In particular, in those typical cases where the periodic orbits are dense in the attractor, a typical trajectory can be viewed as a sequence of finite-time visits to neighborhoods of UPOs. 
As discussed in \cite{jenkinson2019ergodic}, UPOs support ergodic measures that often realize extreme behavior in a typical forced system.

We compute UPOs on the Lorenz attractor as described in the Appendix \ref{sec:PO_computation} for the subsequent analysis. A collection of them is shown in Figure~\ref{fig:UPOs} and some basic characteristics of these orbits are listed in Table~\ref{tab:POs}. In the next sections, we use these UPOs to explore the tipping window for a range of timescale ratios $\gamma$.

\subsubsection{The tipping window for limiting timescale separations}
\label{sec:DW_InfTimescaleSep}

As described in Theorem \ref{thm:limits}, in the limit of infinite timescale separation, the UPO forcing of the form considered here limits to a constant additive shift of the bifurcation parameter $\eta$. The location of the bifurcation then depends on the forcing strength $a$ and is given by that Theorem. We plot the bifurcation location for each of the $20$ UPOs for both limits of infinite timescale separation in Figure~\ref{fig:DWgamma_inf}. 

\begin{figure}
    \centering
    \includegraphics[align=t,width=7.5cm]{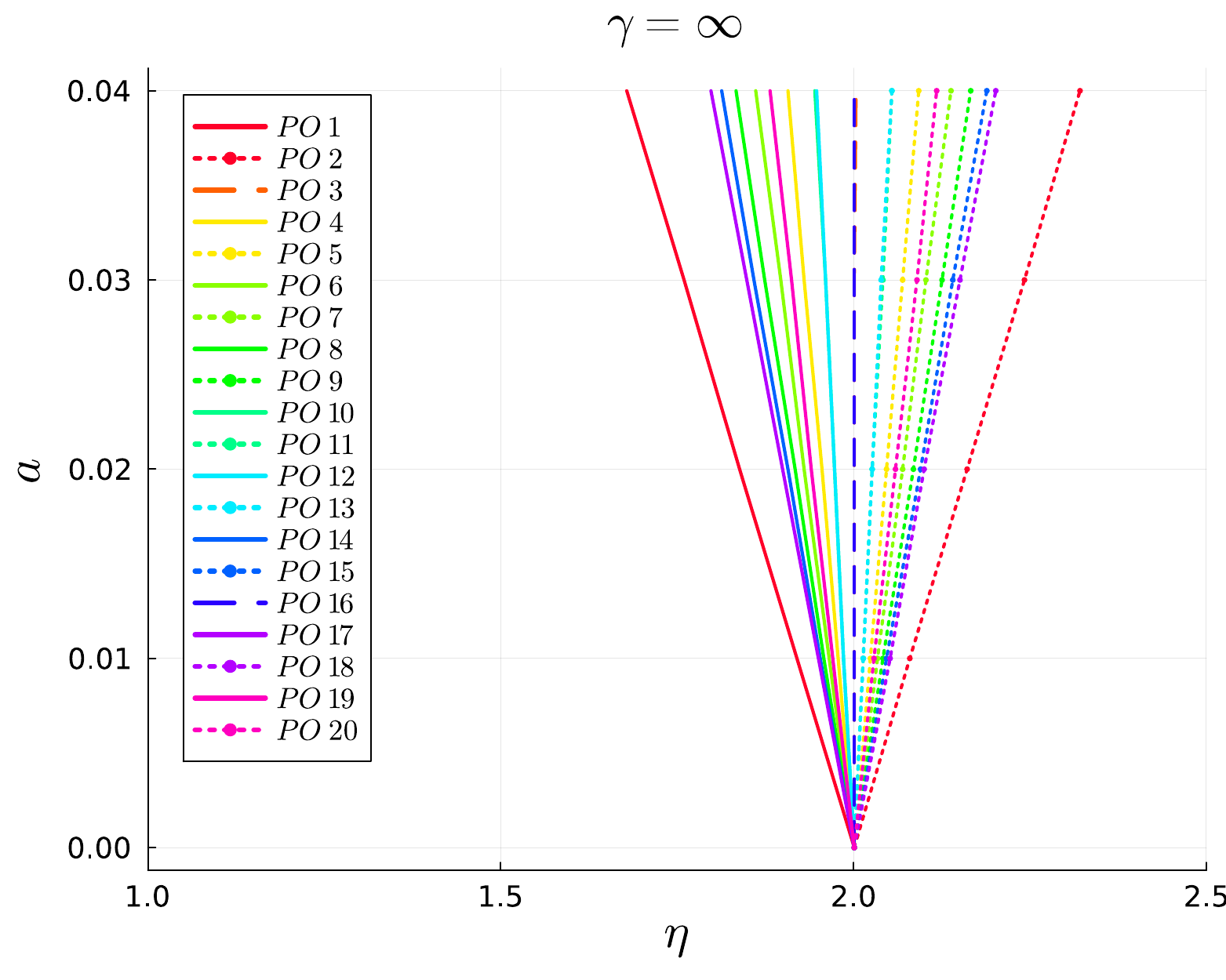} 
    ~\includegraphics[align=t,width=7.5cm]{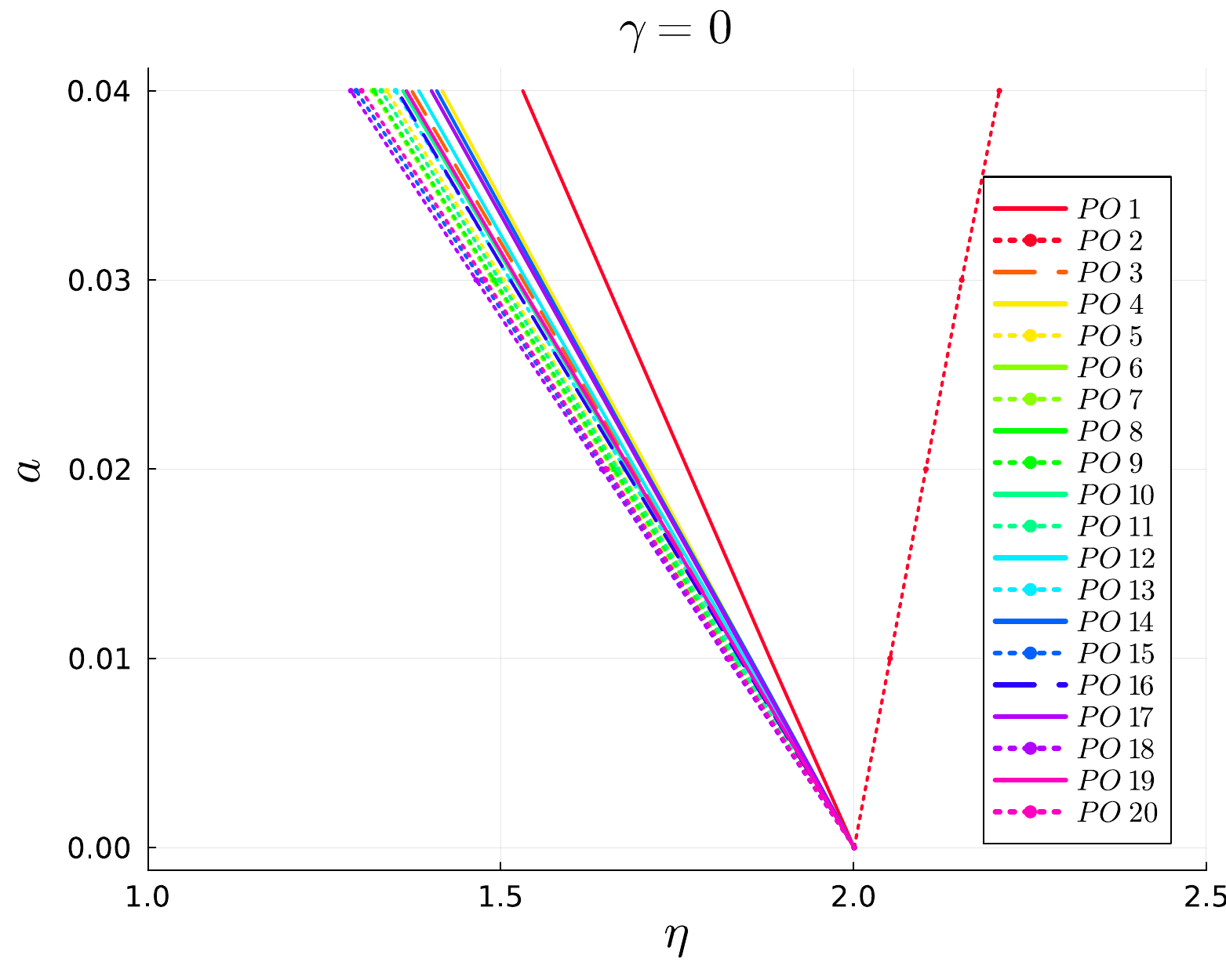}  
    \caption{The coloured lines show for each $a$ the lowest value of $\eta$ for which the $x$ dynamics (initialized in $x=-1.5$) fulfilled $x>0$ at some point during the integration time (i.e. the $x$ dynamics leave the vicinity of $x_-(\eta)$ and go to $x_+(\eta)$ - they ``tip''), when forcing the double-well dynamics given by Equation (\ref{eq:doubleWell}) with the mean (left plot) or the maximum (right plot) of the colour-coded UPO. The law of motion is of the form of Equation (\ref{eq:addforcedSyst}) with $\phi(y(\gamma t))=\bar{y}_{1,UPO_k}$ given by the mean (left plot) or $\phi(y(\gamma t))=\max(y_{1,UPO_k})$ the maximum (right plot) of the $k$-th UPO. The values $\bar{y}_{1,UPO_k}$ and $\max(y_{1,UPO_k})$ are given in Tab. \ref{tab:DWgrayShadingTimes}. For each UPO, we fix discrete values of $a\in[0,0.04]$ with spacing of $0.01$ and then do bisections in $\eta\in(0.4,1.6)$ to approximate the lowest $\eta$ up to an accuracy of $5\times10^{-3}$ for which the system, initialized in $x=-1.5$, tips, i.e. it fulfills $x>0$ during the integration time. Note that in the left plot, the lines are ordered from left to right by decreasing mean $y_1$-value of the associated UPOs, and each line is given by $\eta_k(a)=\eta^*-a\bar{y}_{1,UPO_k}$ as derived in section \ref{sec:inf_ts_sep}. In the right plot, the lines are ordered from left to right by decreasing  maximum $y_1$-value of the associated UPOs, and each line is given by $\eta_k(a)=\eta^*-a\max(y_{1,UPO_k})$.}
    \label{fig:DWgamma_inf}
\end{figure}

For $\gamma \rightarrow \infty$, the $y$-dynamics are infinitely faster than the $x$-dynamics, and Theorem \ref{thm:limits} implies that the lines are ordered by the mean value of the $y$ dynamics, $\bar{y}_1$. Forcing the double-well system by the self-symmetric UPOs 3 and 16 resulted in the dashed lines, which lie on top of each other because they have the same mean $y_1=0$.
The lines resulting from forcing with UPOs that are symmetrically related by a rotation around the $y_3$ axis, given by the mapping $(y_1,y_2,y_3)\mapsto(-y_1,-y_2,y_3)$, are mirrored along the axis $\eta = 2$. 

For the limit $\gamma \rightarrow 0$, the lines are ordered by the $y$-dynamics' maximum value $\max(y_1)$. Forcing by self-symmetric UPOs now does not result in the same dashed line since different symmetric UPOs can have different maximal $y_1$ values.  
In addition, the lines of the rotated periodic orbits are no longer mirrored along a line. Note that even though the lines of the 1st and 2nd UPOs are seemingly in the same positions as for $\gamma \rightarrow \infty$, they slightly shifted since they are now given by the maximum $\max(y_1)$ along the UPOs whereas they were given by the mean $\bar{y}_1$ for $\gamma \rightarrow \infty$.

\subsubsection{The tipping window for intermediate timescale ratios}
\label{sec:DW_ItermediateRelTimescales}

Now, we would like to understand the tipping window for cases other than limiting timescale separation. We simulate for four different values of the timescale parameter $\gamma$, forcing the double-well system with a) the UPOs from Figure~\ref{fig:UPOs} and b) with randomly chosen non-periodic chaotic trajectories on the Lorenz attractor. Note that care needs to be taken that the slowest system is integrated for a sufficiently long duration.

Figure~\ref{fig:DWgamma_intermediate} shows these results in a similar way to Figure \ref{fig:DWgamma_inf}: for the colored curves (resulting from UPO forcing - case a)), for each of the different values of the timescale parameter $\gamma\in(0.01, \, 0.1, \, 1, \, 10)$ in the four panels, we choose a range of different forcing amplitudes $a\in[0,0.04]$, and do a bisection in $\eta$ to find the best approximation of the lowest $\eta$ that leads to the $x$ dynamics tipping to the positive attractor $x_+(\eta)$, when initialized in $x=-1.5$, (i.e. in the basin of attraction of $x_-(\eta)$ if this attractor exists for the given value of $\eta$).

On the same figure, for a grid of $a$ and $\eta$ values, we run one simulation per grid point with the $x$-dynamics initialized in $x=-1.5$, and the $y$-dynamics initialized randomly close to the Lorenz-63 attractor with an initial transient removed. Then we monitor whether the trajectory tips to the other attractor $x_+(\eta)$ and, if so, at which time.

To interpret the four plots for different $\gamma$ in Figure \ref{fig:DWgamma_intermediate}, we think of fixing a forcing strength $a$. For small $\eta$, we do not observe tipping within the given time $t_{max}$ (black area). The larger $\eta$, the shorter the observed tipping times. The right end of the black area tilts to the left for smaller $\gamma$. We see that the cases $\gamma=0.01$ and $\gamma=10$ approximate well the limiting cases shown in Figure \ref{fig:DWgamma_inf}. For intermediate $\gamma$, a rather smooth interpolation between the two limiting cases can be seen.

\begin{figure}
    \centering
    \includegraphics[align=t,width=7.5cm]{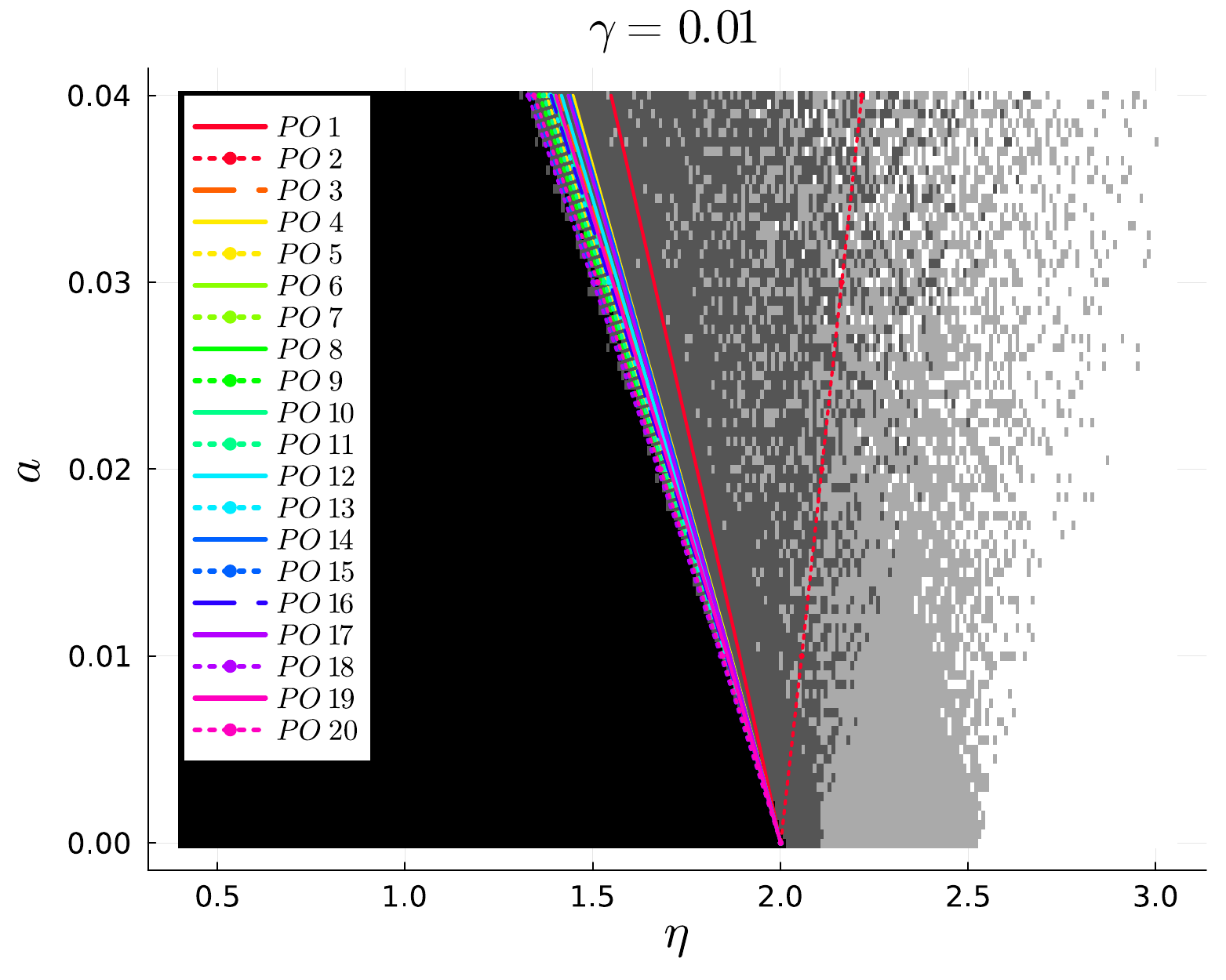}
    \includegraphics[align=t,width=7.5cm]{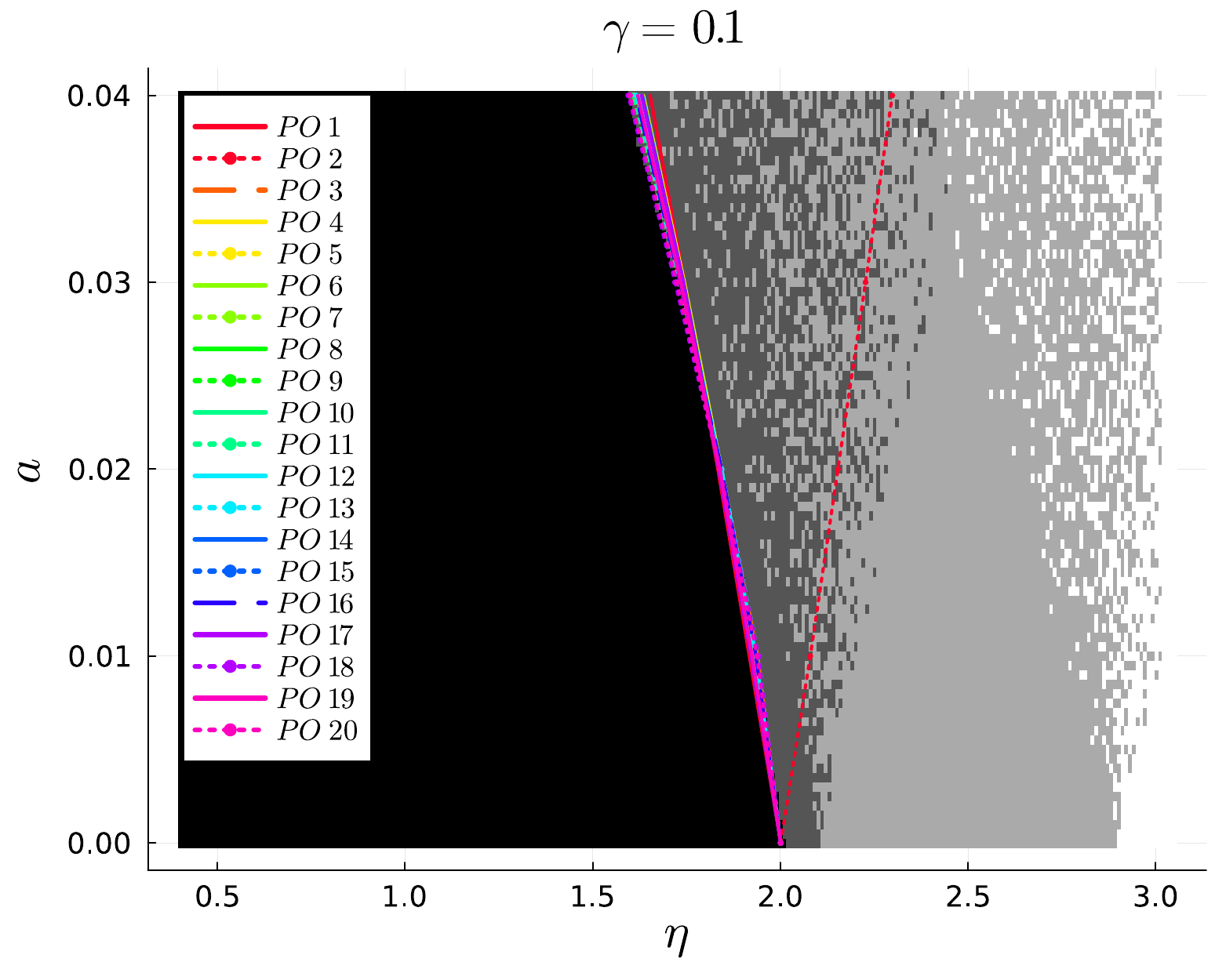}
    \includegraphics[align=t,width=7.5cm]{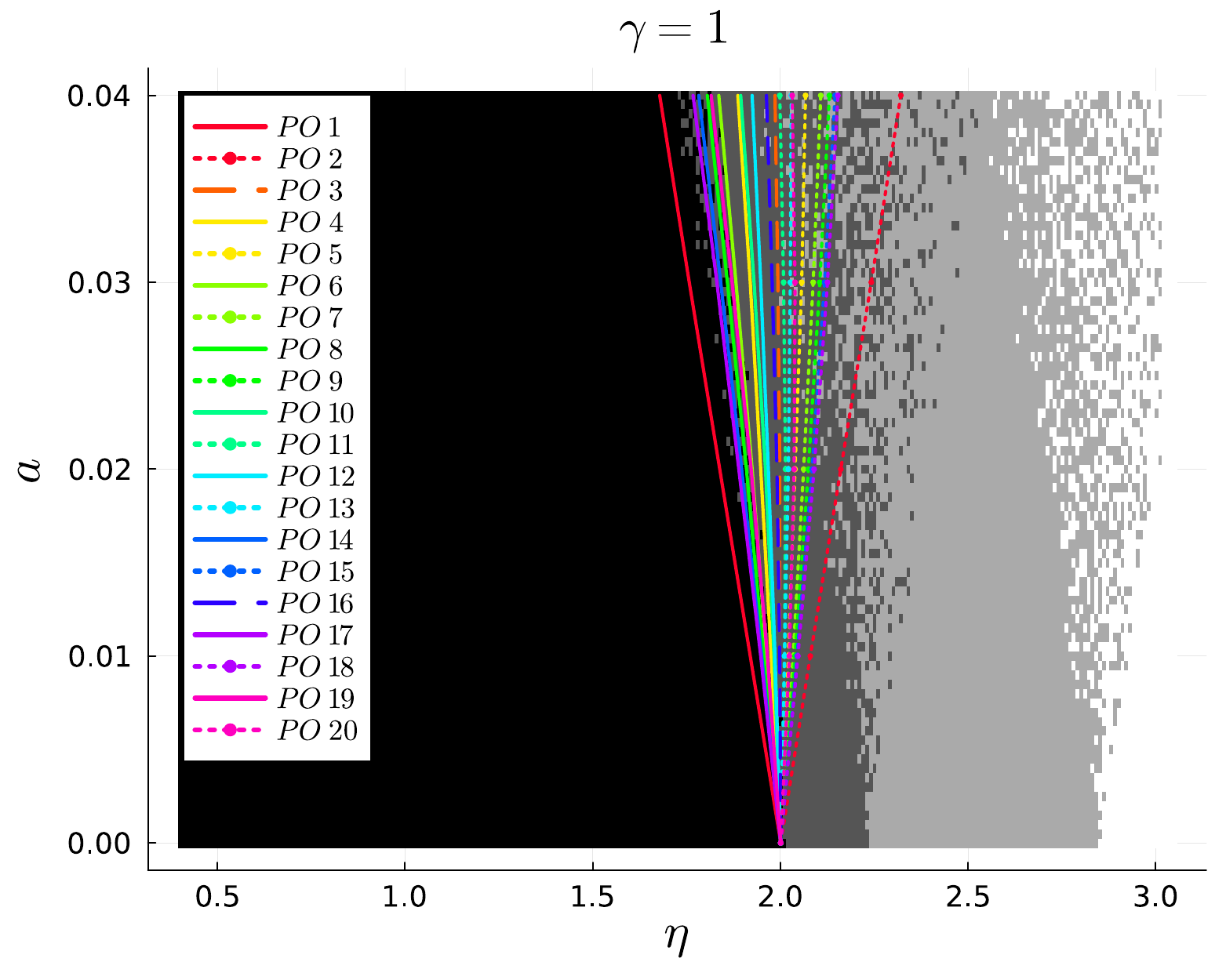}
    \includegraphics[align=t,width=7.5cm]{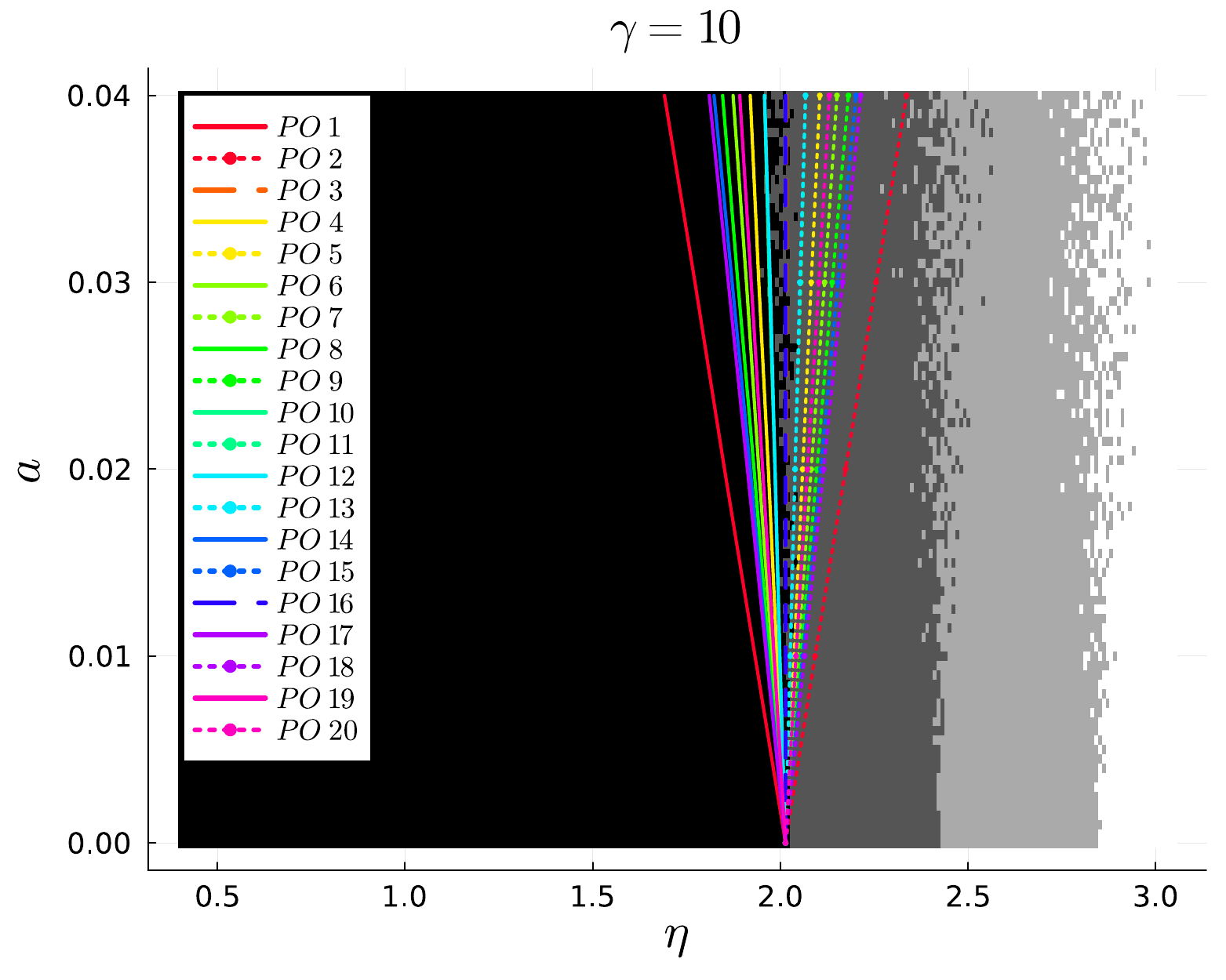}
    \caption{
    The colored lines show for each $a$ the lowest value of $\eta$ for which tipping is observed when forcing the double-well dynamics given by Equation (\ref{eq:doubleWell}) with the color-coded UPO. The system is of the form of Equation (\ref{eq:addforcedSyst}) with $y_1(t)=y_{1,UPO_k}$ given by the $k$th UPO and $\gamma\in(0.01, \, 0.1, \, 1, \, 10)$. For each UPO, we fix discrete values of $a\in[0,0.04]$ with spacing $0.01$ and then do bisections in $\eta\in(0.4,1.6)$ to approximate the lowest $\eta$ up to an accuracy of $5\times10^{-3}$ for which the system initialized in $x=-1.5$ tips to the positive attractor i.e. $x>0$ within the integration time.
    The black, dark gray, light gray, and white shading show in which time interval tipping of the double-well system \eqref{eq:LorDoubleWell} initialized in the basin of the attractor $x_-(\eta)$ (at $x=-1.5$), was observed as a result of forcing with a randomly chosen Lorenz-trajectory. The Lorenz initial condition of one run is given by letting the final condition of the previous run evolve for $5$ Lorenz-timesteps. The first initial condition of the Lorenz system is given by starting the system at $(y_1,y_2,y_3)=(0.1,0.1,25.1)$ and letting it relax to the attractor for $5$ Lorenz-timesteps.
    In the white areas, tipping to the other attractor was observed at a very small time, and in the black areas, tipping was not observed at all during the simulation time. The gray shadings show tipping for intermediate times. The time intervals corresponding to the different shading colors are given in Table \ref{tab:DWgrayShadingTimes} for each value of $\gamma$.}
    \label{fig:DWgamma_intermediate}
\end{figure}

\begin{table}
    \centering
    \begin{tabular}{c|c|c|c|c}
        $\gamma$& white    & lightgray & darkgray & black \\ \hline
        0.01   &[0,\, 0.0175)&[0.0175,\, 0.05)&[0.05,\, 150)&$[150,\, \infty)$ \\ \hline
        0.1    &[0,\, 0.12)&[0.12,\, 0.5)&[0.5,\, 150)&$[150,\, \infty)$ \\ \hline
        1 	   &[0,\, 1.25)&[1.25,\, 3.0)&[3.0,\, 150)&$[150,\, \infty)$ \\ \hline
        10     &[0,\, 12)&[12,\, 20)&[20,\, 150)&$[150,\, \infty)$ 
    \end{tabular} 
    \caption{Time intervals used for the different gray shadings in Figure~\ref{fig:DWgamma_intermediate}; if a randomly chosen initial condition for $y$ (with the $x$ dynamics initialized in $x=-1.5$) results in tipping within one of the time intervals then the corresponding shading is used, otherwise we used black to indicate that tipping has not occured by time $t_{max}=150$.}
    \label{tab:DWgrayShadingTimes}
\end{table}

Note that for $\gamma=0.01$ some chaotically forced trajectories tipped during the given simulation time for $\eta$ smaller than the leftmost UPO line. This is not observed for large $\gamma$. 
This can be expected from Equations \eqref{eq:limsTipWin_gamma0} and \eqref{eq:limsTipWin_gammaInf} for the {\bf lower bound of the tipping window:} 
\begin{itemize}
    \item For $\gamma$ small, the maximal $y_1$-values along the entire length of the orbits are relevant for the lower bound of the tipping window. A randomly chosen long chaotic trajectory on the Lorenz attractor is likely to assume larger values of $y_1$ than one of the $20$ considered UPOs. 
    \item For $\gamma$ large, the mean of the forcing trajectory is the relevant quantity. As the mean of a randomly chosen chaotic Lorenz trajectory is likely to be close to zero, more extreme forcing can be seen from UPO forcing, as UPOs are very unusual trajectories with a mean bound away from zero if they are asymmetric.
\end{itemize}

Approximating the {\bf upper bound of the tipping window} is easier using UPOs both for $\gamma$ small and for $\gamma$ large:
\begin{itemize}
    \item For $\gamma$ small, the minimal $y_1$-values along the entire length of the orbits are relevant for the upper bound of the tipping window. A randomly chosen long chaotic trajectory on the Lorenz-63 attractor is likely to explore both ``wings'' of the Lorenz-63 attractor (i.e. assumes both positive and negative $y_1$ values), while some UPOs (e.g. UPO 2) only explore negative values of  $y_1$. 
    \item For $\gamma$ large, the mean of the forcing trajectory is the relevant quantity. As the mean of a randomly chosen chaotic Lorenz trajectory is likely to be close to zero, more extreme forcing can be seen from UPO forcing.
\end{itemize}

Thus, the only case where chaotic forcing should be used to approximate the boundary of the tipping window in the Lorenz-double well system is when we are interested in the lower bound of the tipping window for small $\gamma$.

When fixing the forcing strength to some value $a=0.04$, we can plot $\gamma$ vs. $\eta$  and see how the intermediate cases interpolate between the limits of infinite timescale separation shown in Figure \ref{fig:DWfixed_a}.

\begin{figure}
    \centering
    \includegraphics[align=t,width=12.5cm]{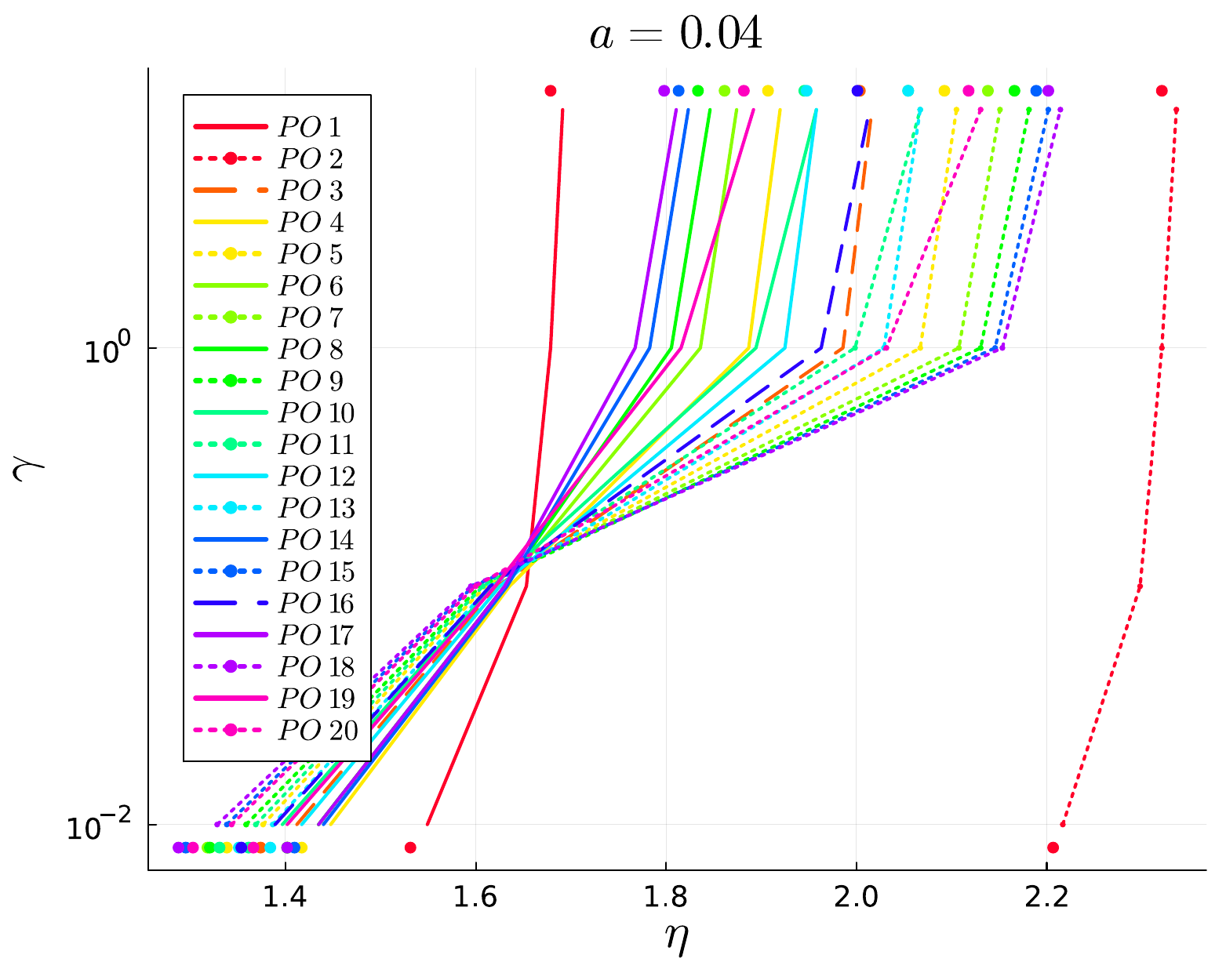}
    \caption{For fixed $a=0.04$, we plot the critical values of $\eta$ from Figure \ref{fig:DWgamma_intermediate} vs. $\gamma$. The dots at the top and bottom show the $\eta$ value at tipping of the tipping of the $x$ dynamics for the limiting cases of timescale separation.}
    \label{fig:DWfixed_a}
\end{figure}

\subsubsection{Crossing of UPO lines and the lower bound of the tipping window}
\label{sec:CrossingAndLowerBound}

In Figure \ref{fig:DWgamma_intermediate}, we can see that for $\gamma=0.1$, the lines resulting from forcing with UPOs $3$ to $20$ cross the UPO1 line. The location of the crossing points of each single UPO3-20 line with the UPO1 line depends on $a$, $\eta$, and $\gamma$.

We want to approximate the location of the crossing point of the UPO1 and UPO4 lines for different values of $\gamma$. 
First, we construct an approximation of the line resulting from UPO1 forcing that approximately holds for all values of $\gamma$. It lies exactly between the UPO1 lines of the limiting cases $\gamma=0$ and $\gamma=\infty$, and is given by $\eta(a)=\eta^*-\frac{a}{2}(\max(y_{1,UPO_1}) + \bar{y}_{1,UPO1} )$. Then, we fix a value of $\gamma$ and approximate the smallest value of \ $a$ \ for which the Lorenz-double-well system's $x$-dynamics would tip in response to UPO4 forcing, when $\eta(a)$ is given by this approximation of the UPO1 line. We find this value using a bisection in \ $a$. Figure \ref{fig:DWgamma_a_scaling} shows that the location where the UPO1 and UPO4 lines cross, scales approximately quadratically as $a \propto \gamma^2$.

\begin{figure}
    \centering
    \includegraphics[align=t,width=14cm]{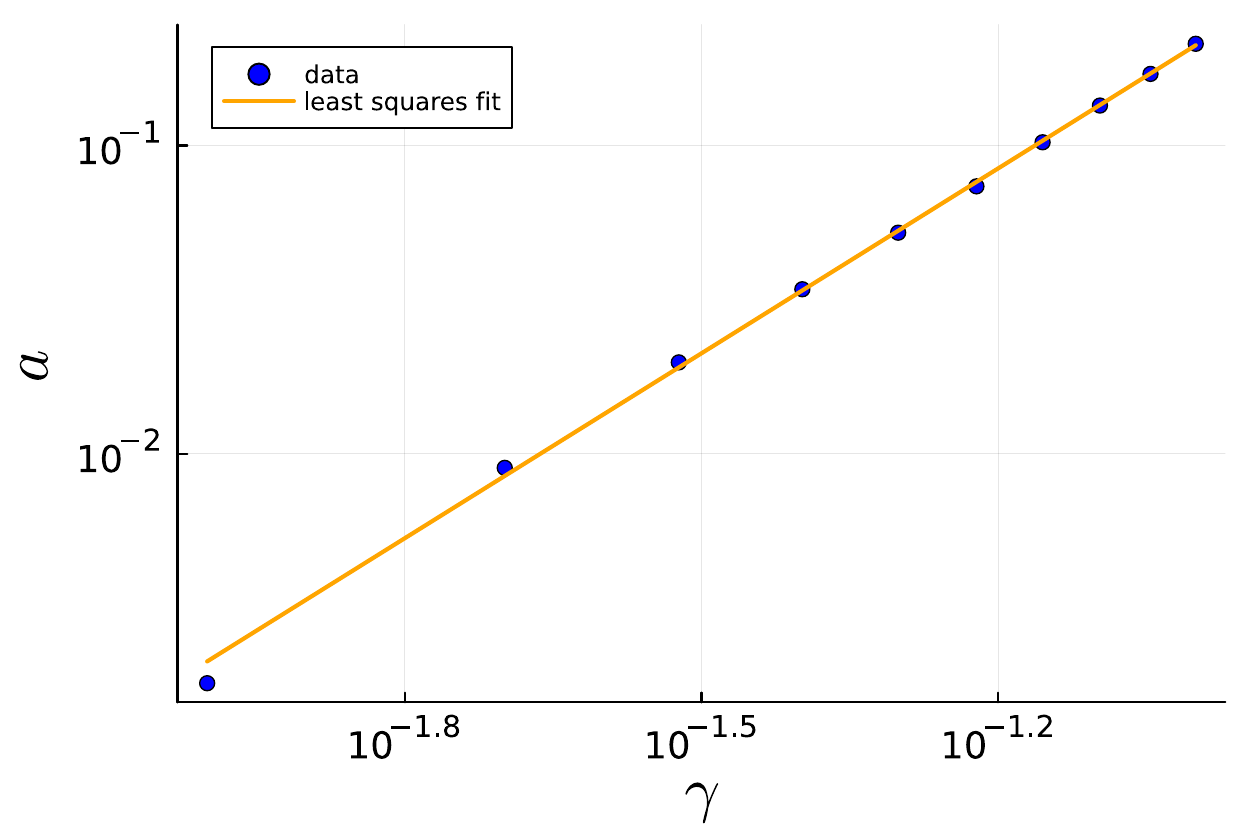}  
    \caption{For a set of fixed $\gamma\in\{0.01,0.02,\ldots,0.1\}$, we approximate the value of \ $a$ \ for which tipping of the Lorenz-double-well system's $x$ dynamics in response to UPO4 forcing occurs at the same value of $\eta$ as tipping in response to UPO1 forcing using a bisection method. We do this by fixing a $\gamma$ value, setting $\eta(a)=\eta^*-\frac{a}{2}(\max(y_{1,UPO_1}) + \bar{y}_{1,UPO1} )$ (which is an approximation of the $\eta$ value for which the system tips in response to UPO1 forcing), and then doing a bisection in \ $a$ \ to find the smallest value of \ $a$ \ for which the $x$ dynamics tip in response to UPO4 forcing. This gives \ $a$ \ as a function of $\gamma$ , shown here in blue in the log-log plot. The orange line corresponds to a least-squares fit of the function $a=c\gamma^2$ to the blue dots with best fit $c= 21.15$ and standard error $\pm 0.00129$.}
    \label{fig:DWgamma_a_scaling}
\end{figure}

This seems reasonable, as for intermediate timescale ratios, the forcing by a UPO may overshoot the threshold $\eta$ for a short time, and as long as this is not too long relative to the time spent below the threshold, the system may still not tip.

In this context, we note that an inverse square law determines whether a temporary overshoot over a fold bifurcation leads to tipping to the other attractor or not \cite{ritchie2019inverse}; this is given by $R t_{over}^2<\kappa$ with $R$ the maximal value of the overshoot, $t_{over}$ the total time during which the parameter is larger than the bifurcation value, and $\kappa\in\mathbb{R}$ a constant plus an additive correction of the order of the timescale of the parameter ramping.
By using the relation $\eta(a)=\eta^*-\frac{a}{2}(\max(y_{1,UPO_1}) + \bar{y}_{1,UPO1} )$, we can assume the overshoot time $t_{over}$ to be inversely proportional to $\gamma$ and the maximal overshoot distance $R$ to be proportional to the forcing amplitude $a$. This suggests that the critical value of $a$ scales as
$a\approx \kappa\gamma^2$ which agrees well with the numerics shown in Figure \ref{fig:DWgamma_a_scaling}.

\subsection{Example II: 2D-Stommel model forced by the Lorenz-63 system}
\label{sec:StommelPOs}

Our analysis so far has been concerned with the case of one-dimensional dynamics forced by a one-dimensional projection of a chaotic system. In most real-world applications, the system that is forced by another chaotic system is higher-dimensional, though. 
In this section, we explore a simple two-dimensional model inspired by climate science to check that the qualitative conclusions also apply in a higher-dimensional system.

We consider a system inspired by \cite{ashwin2021physical}, which consists of a Lorenz-63 system forcing a Stommel model. The Stommel model is a strongly reduced model of the Atlantic meridional overturning circulation (AMOC) \cite{stommel1961thermohaline}, whose two state variables $x\in\bbR^2$ describe the salinity and temperature difference between two ocean regions \cite{Dijkstra:2013}. 

Forcing the Stommel system with the Lorenz-63 system (with the typical parameters in the chaotic regime) can be seen as a generic, bounded chaotic forcing of the Stommel model and as a simple way of modelling internal variability; the timescales may or may not be separated depending on the choice of $\gamma$.
Note that we do not claim the considered parameters and model architecture to be of direct relevance to the real-world AMOC.
The model equations for $x=(x_1,x_2)$ and $y=(y_1,y_2,y_3)$ are given by
\begin{equation} \label{eq:Lorenz-Stommel}
\left.\begin{aligned}
    \frac{d}{dt}{x} &= f(x,\xi_1+a y_1, \eta_1 + a y_1)  \\
    \frac{d}{dt}{y} &= \gamma g(y)
    \end{aligned}\right\}
\end{equation}
where the Stommel right-hand side is
\begin{equation} \label{eq:Stommel}
\left.\begin{aligned}
    f_1(x,\xi_1+a y_1, \eta_1 + a y_1) &:= - x_1 (1 + | x_1 - x_2|)+\xi_1+a y_1 \\\
    f_2(x,\xi_1+a y_1, \eta_1 + a y_1) &:= - x_2(\zeta + |x_1 - x_2|) +\eta_1 + a y_1 \\
    \end{aligned}\right\}
\end{equation}
with $\zeta = 0.3, \, ~\xi_1 = 3$, and $\eta_1 \in [0,1.8]$ and the Lorenz right hand side is
\begin{equation}\label{eq:Lorenzz}
\left.\begin{aligned}
    g_1(y) &:= \sigma (y_2 - y_1) \\
    g_2(y) &:=y_1(\rho-y_3)-y_2 \\
    g_3(y) & := y_1 y_2 - \beta y_3.
\end{aligned}\right\}
\end{equation}
with the standard parameters $\sigma = 10, ~ \rho = 28,$ and $\beta = 8/3$. We vary $\gamma$, $a$ and $\eta_1$ in the numerical experiments and the timescale $t$ is that of the response system, Equation (\ref{eq:addforcedSyst}). 

In this model, the AMOC-strength can be defined as the difference of the Stommel variables \cite{Dijkstra:2013} as\footnote{This expression means that the AMOC strength is proportional to the positive temperature gradient and to the negative salinity gradient.}
\begin{equation}
    \Psi = x_1-x_2.
\end{equation}
The unforced Stommel model (\ref{eq:Stommel}) (i.e. the $x$-dynamics from (\ref{eq:Lorenz-Stommel}) with $a=0$) has a stable attractor, the AMOC-on state with large $\Psi_+(\eta)$ for $\eta<\eta^*$, a saddle-node bifurcation on increasing $\eta$ at $\eta^*\approx1.22$ and for $\eta>\eta^*$, the system has only one attractor; the AMOC-off state with small $\Psi_-(\eta)$. We remark that \cite{AxelsenQuinn2024} study attractor crises in this model, though they do not consider the effect of varying the timescale ratio between forcing and response system.

For $a>0$, the Stommel system is forced by the first chaotic variable $y_1$, so we are in a higher dimensional version of the setting (\ref{eq:addforcedSyst}). Note that the forcing does not only act on the bifurcation parameter $\eta_1$, which appears in the second Equation of the Stommel system \eqref{eq:Stommel}, but it also acts on the dynamics of the first Stommel variable. In Figure \ref{fig:LStyptraj}, we see typical trajectories showing the AMOC strength.

\begin{figure}
    \centering
    \includegraphics[align=t,width=14cm]{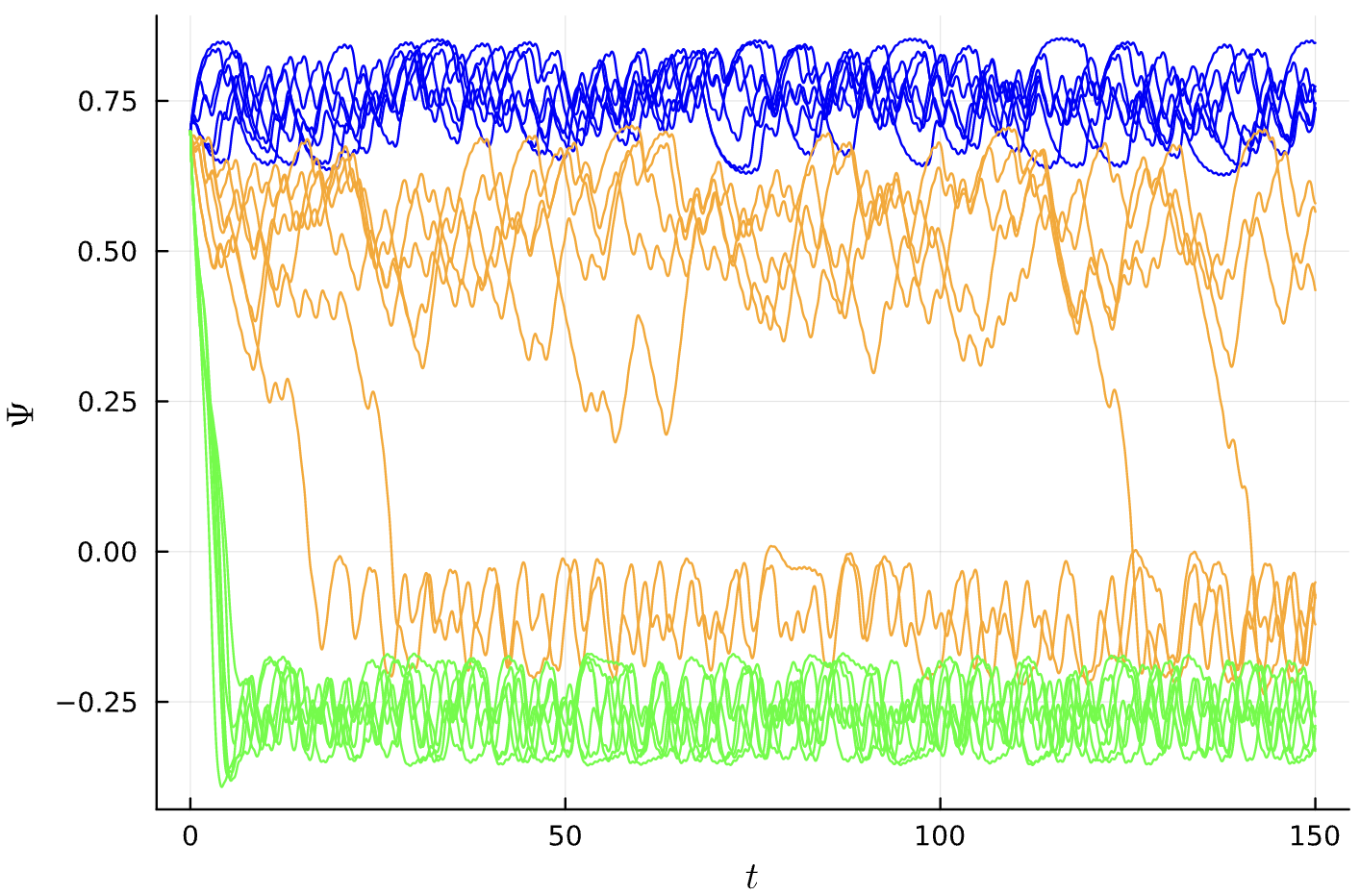}  
    \caption{Typical trajectories of the Lorenz-Stommel system (\ref{eq:Lorenz-Stommel}) with $a=0.04$, $\gamma=1$, randomly chosen initial conditions of the Lorenz-$y$-dynamics, and three different values of $\eta_1$. The $x$-dynamics are initialized in the vicinity of the AMOC-on attractor at $(x_1,\,x_2) = (1.7,\, 1.0)$. The blue curves ($\eta_1=1.0$) remain in the vicinity of the on-attractor of the response system. The orange curves ($\eta_1=1.175$) show tipping within the given time depending on the specific $y$-trajectory. Eventually, they would all go to the off attractor as they would leave the zero-measure set of remaining orbits (if these still exist for $\eta_1=1.175$) due to numerical errors. The green curves ($\eta_1=1.5$) all tip independently of the specific Lorenz trajectory. The $y$-trajectories are generated by taking the final state of the previous ensemble member as the initial condition for the next one.}
    \label{fig:LStyptraj}
\end{figure}

To approximate the chaotic tipping window, we need to understand for which values of $a$ and $\eta_1$ the Stommel system initialized in the basin of the AMOC-on state tips to the AMOC-off state. 
Therefore, we perform simulations similar to those in Section \ref{sec:DoubleWellLorenz} both for limiting timescale separation and intermediate timescale ratios.

\subsubsection{The tipping window for limiting timescale separations}\label{sec:LS_InfTimescaleSep}

First, we consider the two limiting cases of infinite timescale separation $\gamma=0$ and $\gamma=\infty$. 
We initialize the system in the vicinity of the AMOC-on state and force it with either the means of the first UPO coordinates (for $\gamma=\infty$) or with the maxima of the first UPO coordinates (for $\gamma=0$) and do a bisection in $\eta_1$ to find the lowest $\eta_1$ for which the system tips. The results are shown in Figure \ref{fig:LSgamma_lims}. We observe a very similar behavior to the case of the double-well system shown in Figure \ref{fig:DWgamma_inf}.

\begin{figure}
    \centering
    \includegraphics[align=t,width=7.5cm]{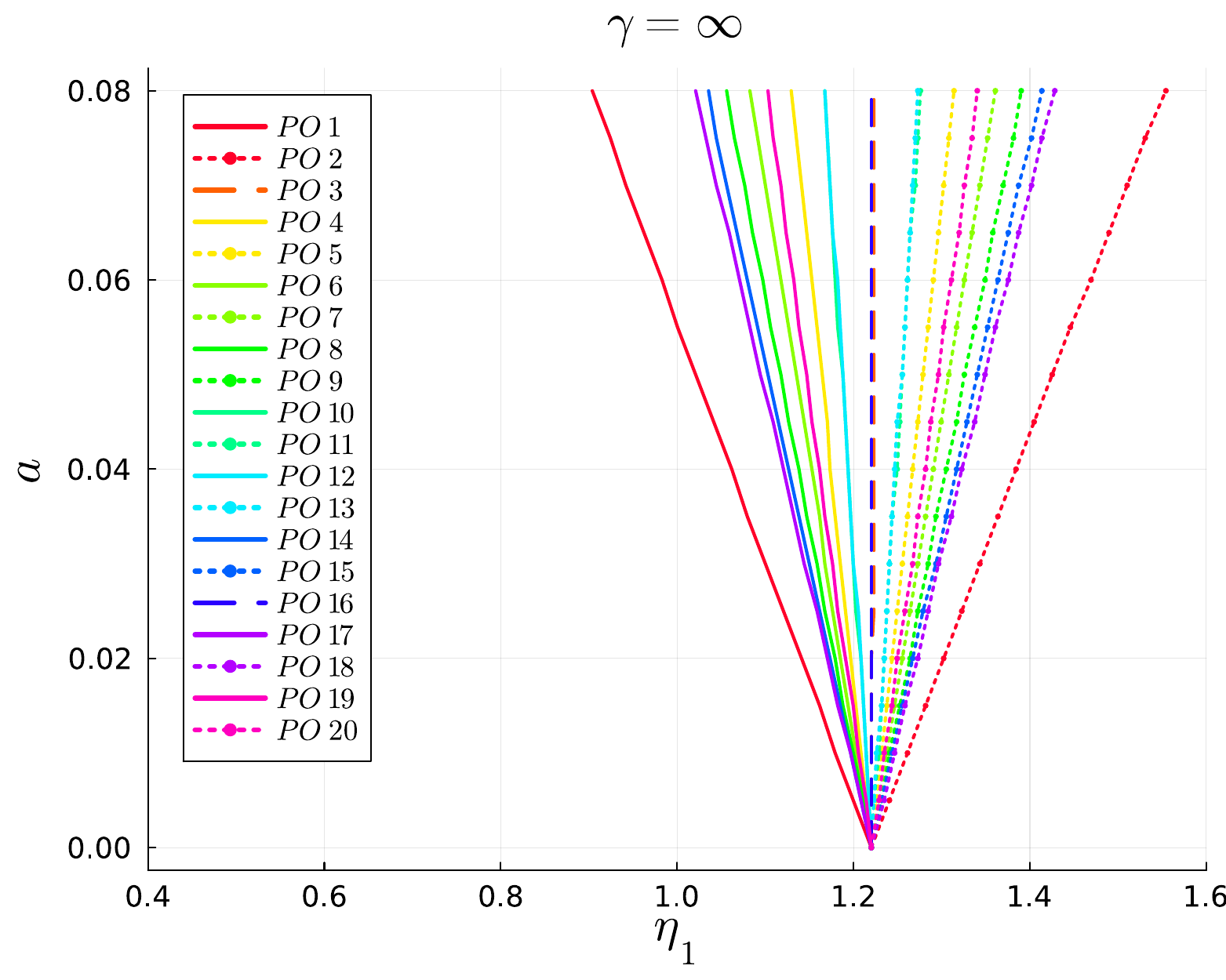} 
    ~\includegraphics[align=t,width=7.5cm]{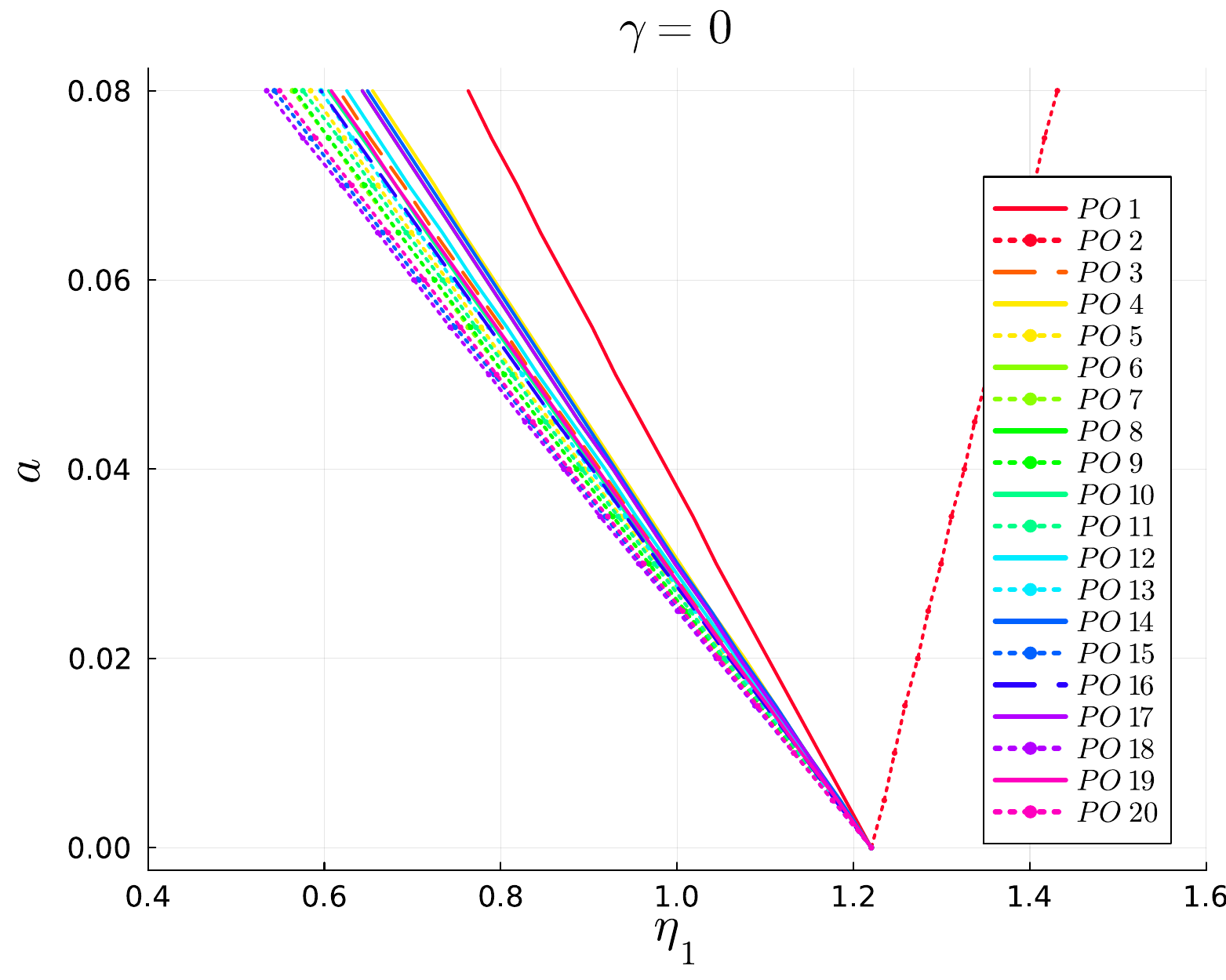}  
    \caption{The colored lines show for each $a$ the lowest value of $\eta_1$ for which tipping is observed when forcing the Stommel system (\ref{eq:Lorenz-Stommel}) with the mean $\bar{y}_{1,UPO_k}$ (left panel) or with the maximum $\max(y_{1,UPO_k})$ (right panel) of the first component of a colour-coded UPO from the Lorenz system. For each UPO, we fix discrete values of $a\in[0,0.08]$ with spacing $0.01$ and then perform bisections in $\eta_1\in(0.4,1.6)$ to approximate the lowest $\eta_1$ up to an accuracy of $5\times10^{-3}$ for which the system initialized in the vicinity of the AMOC-on state at $(x_1,x_2)=(1.7,1.0)$, tips to the AMOC-off state, i.e. $\Psi<0.1$. Note the similarity to Figure~\ref{fig:DWgamma_inf}. }
    \label{fig:LSgamma_lims}
\end{figure}

\subsubsection{The tipping window for intermediate timescale ratios}
\label{sec:LS_ItermediateRelTimescales}

Now, we turn to cases of intermediate timescale ratio. Again, these simulations are performed similarly to the cases a) and b) described in Section \ref{sec:DW_ItermediateRelTimescales}. 

In case b) (forcing with randomly chosen trajectories of the Lorenz-63 system), we run the system on a grid of $a\in(0,0.08)$ and $\eta_1\in(1,1.5)$,  always choose the initial condition of the Stommel system \eqref{eq:Stommel} to be in the vicinity of the AMOC-on state by setting $(x_1,x_2)=(1.7,1.0)$, and consider a randomly chosen trajectory on the Lorenz attractor. Then, we monitor the time when the system tips, if it tips at all during the simulation time, and show the results through the gray shading in
Figure \ref{fig:LSgamma_intermediate}. To ensure that the initial conditions of the consecutive runs are not correlated, we choose the final condition of the Lorenz system of the previous run, evolve it for 5 Lorenz time units, and then choose this state as the initial condition for the next run. 

Case a) (forcing with the UPOs on the Lorenz-63 attractor) is again performed similarly to the analysis of the double-well system. We initialize the system in the vicinity of the AMOC-on state, force it with the UPOs, and perform a bisection in $\eta_1$ to find the lowest $\eta_1$ for which the system tips.
Note that these plots look qualitatively similar to those for the double-well system in Figure \ref{fig:DWgamma_intermediate}.

\begin{figure}
    \centering
    \includegraphics[align=t,width=7.5cm]{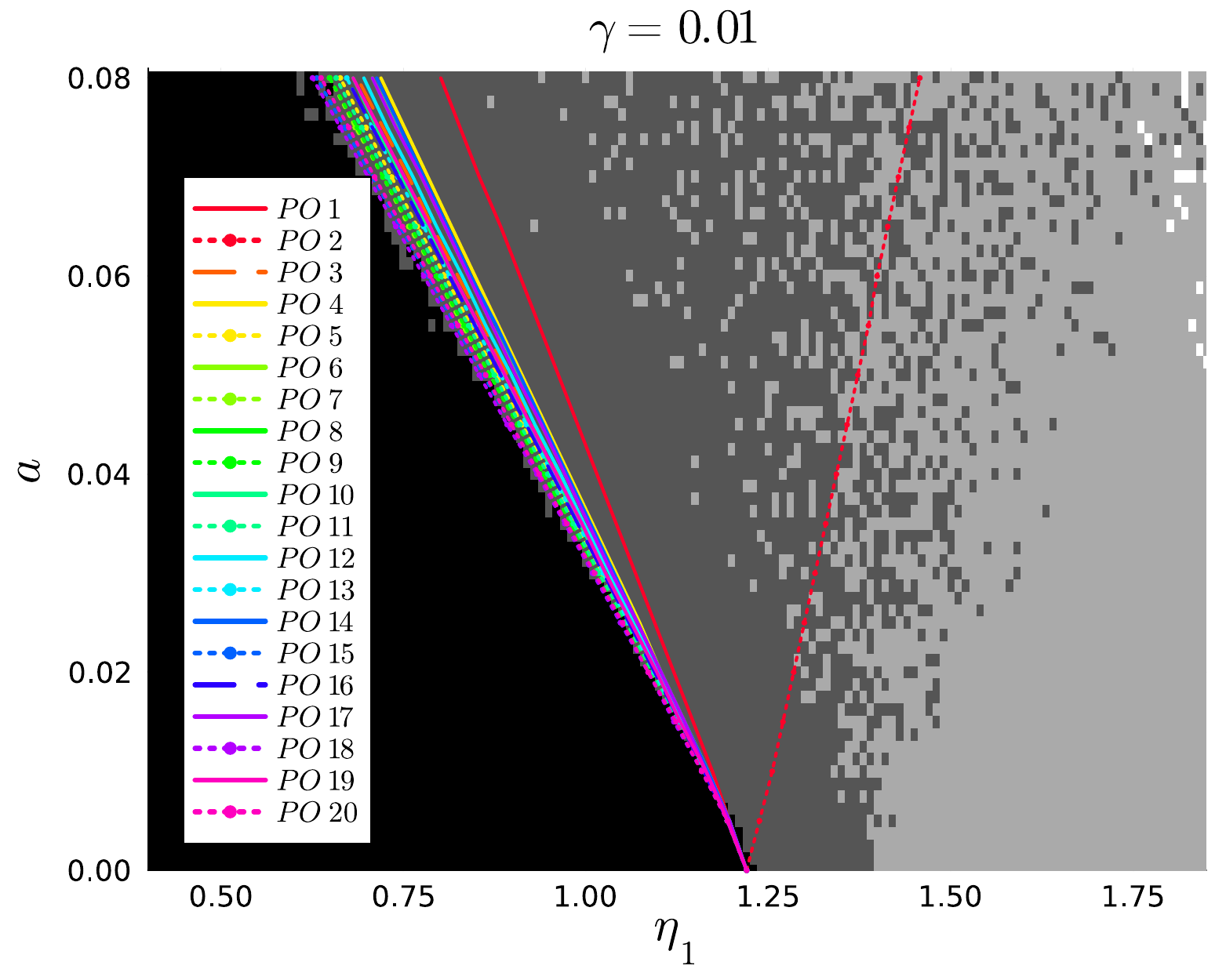}
    \includegraphics[align=t,width=7.5cm]{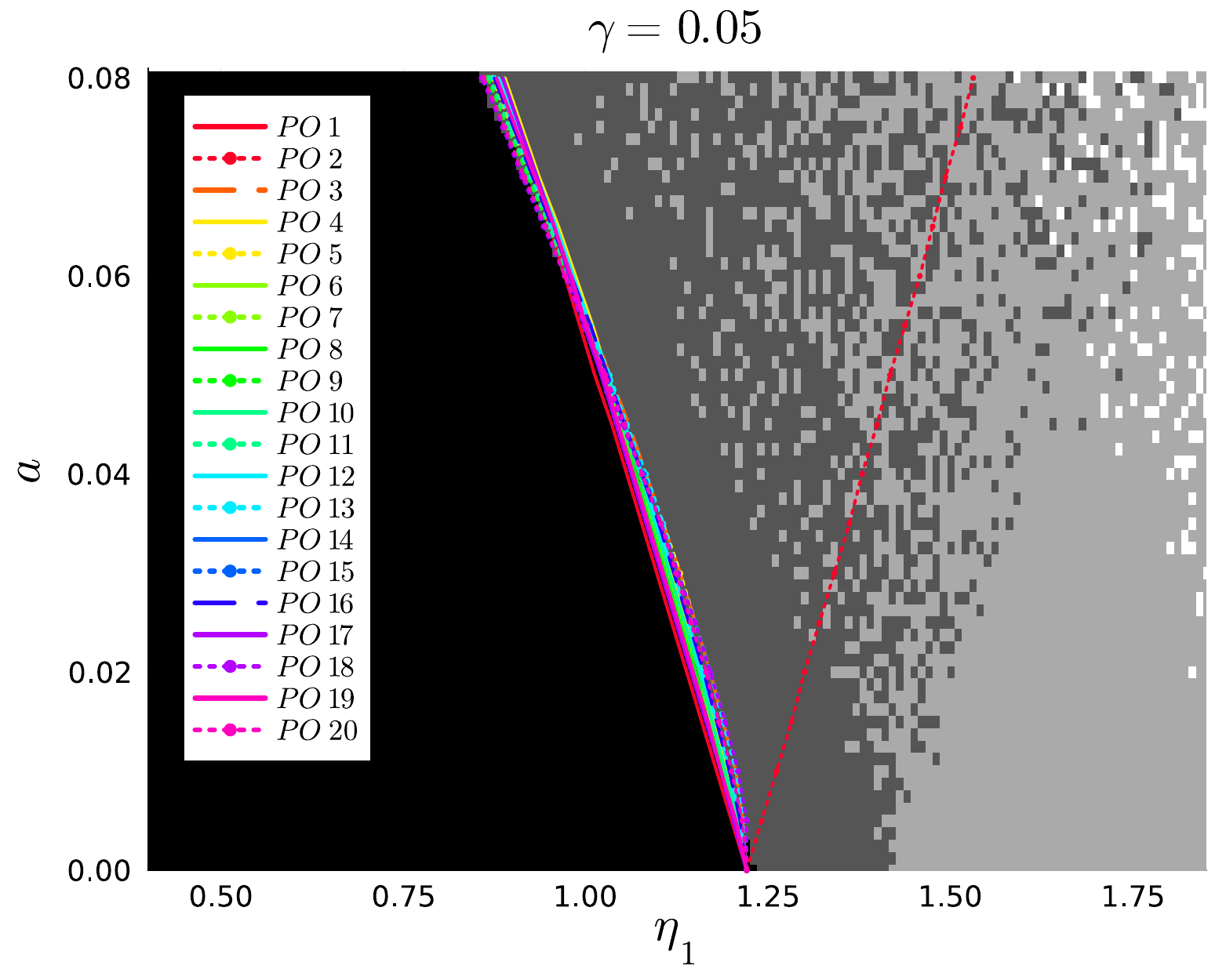}
    \includegraphics[align=t,width=7.5cm]{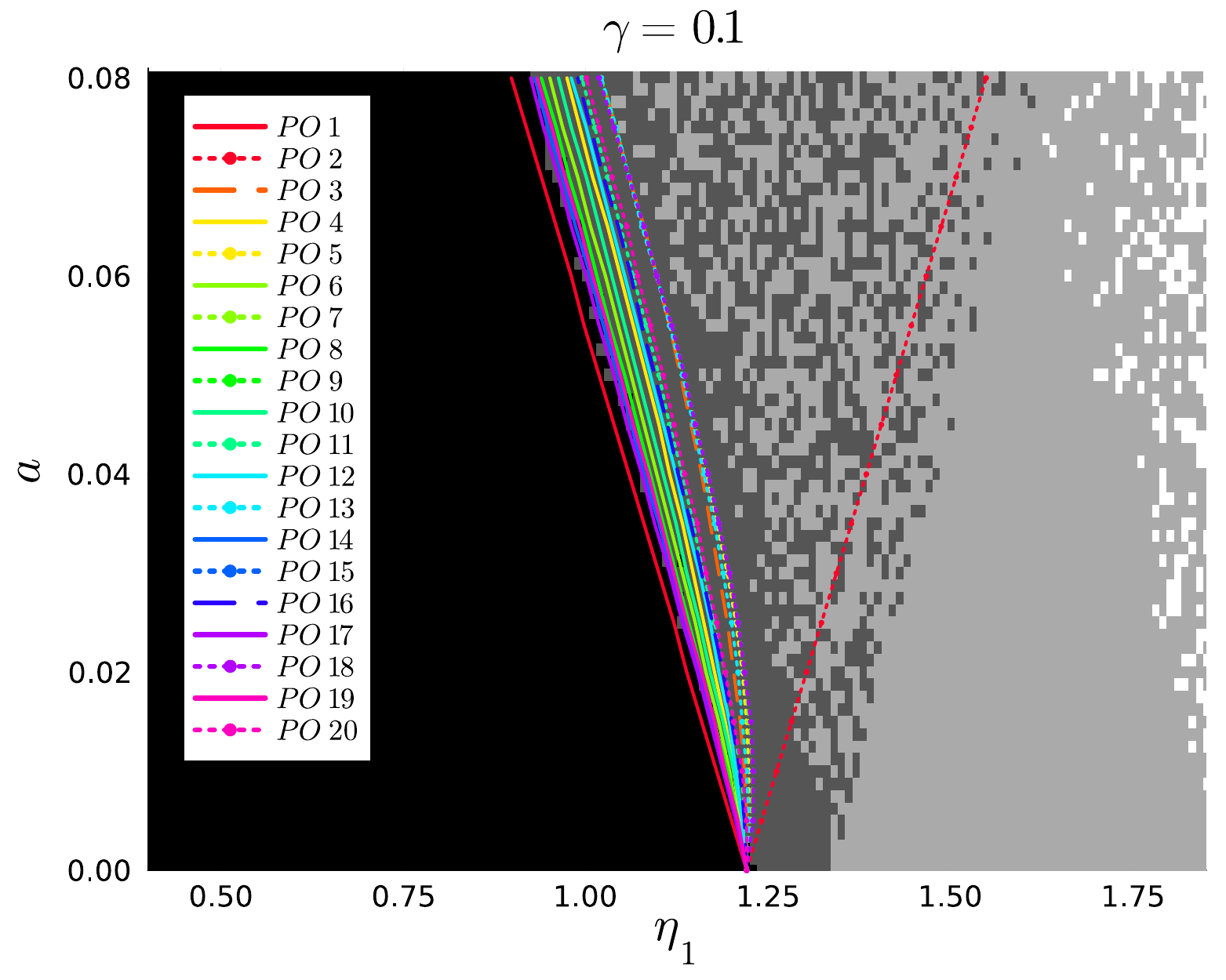}
    \includegraphics[align=t,width=7.5cm]{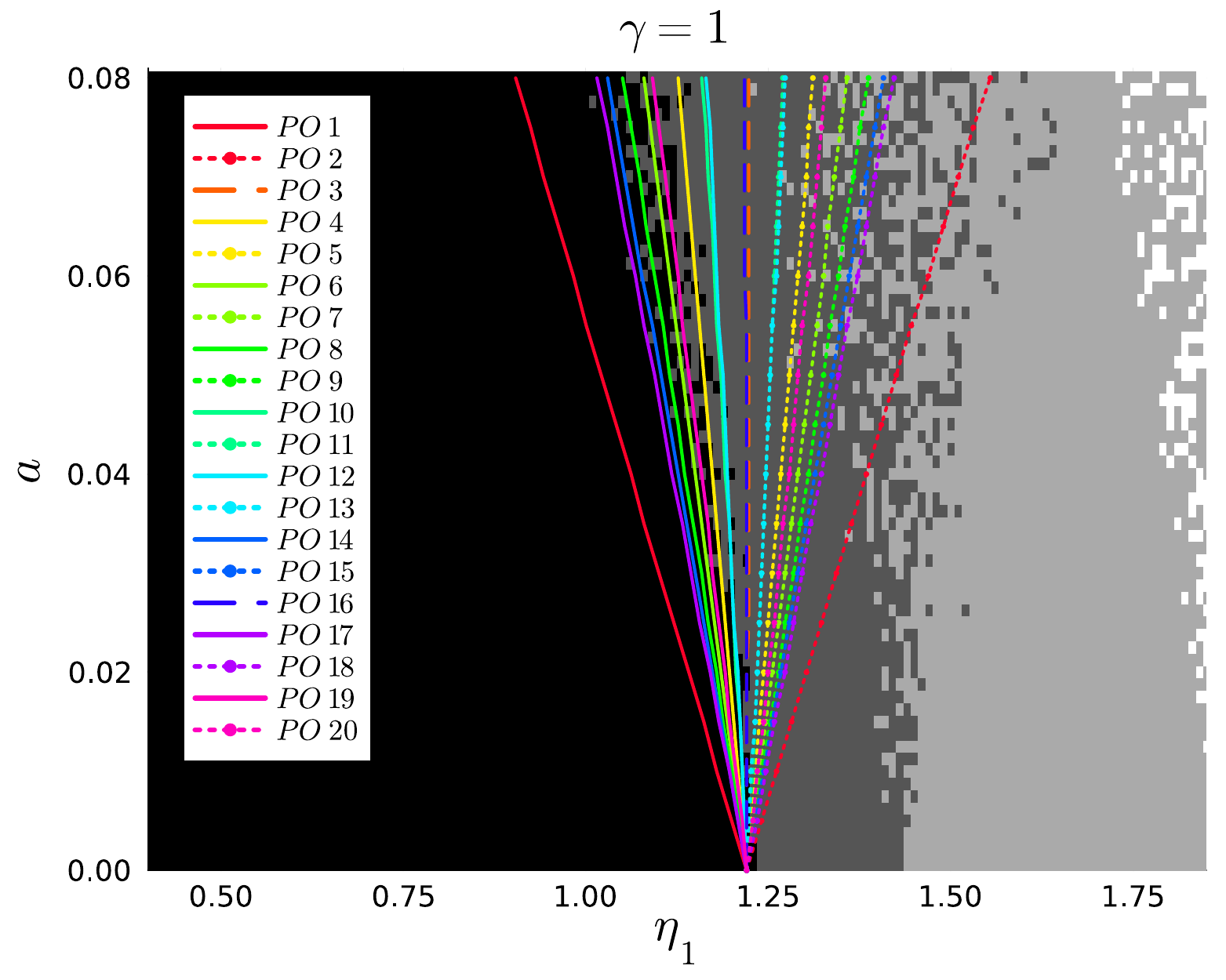}
    \caption{The colored lines show for each $a$ the lowest value of $\eta_1$ for which tipping is observed during the simulation time when forcing the Stommel system \ref{eq:Lorenz-Stommel} with the color-coded UPO. The system is a higher dimensional version of Equation \eqref{eq:addforcedSyst} with $y(t)=y_{UPO_k}$ given by the $k$th UPO and $\gamma\in(0.01, \, 0.05, \, 0.1, \, 1)$. For each UPO, we fix discrete values of $a\in[0,0.08]$ with spacing of $0.01$ and then perform bisections in $\eta_1\in(0.4,1.6)$ to approximate the lowest $\eta_1$ up to an accuracy of $5\times10^{-3}$ for which the system initialized in the vicinity of the AMOC-on state at $(x_1,x_2)=(1.7,1.0)$, tips to the AMOC-off state, i.e. $\Psi<0.1$.
    The black, dark gray, and light gray shadings are computed analogously to Figure \ref{fig:DWgamma_intermediate}: for each combination of $(\eta_1,a)$, we randomly chose a single orbit of the Lorenz system to force the Stommel system (\ref{eq:Lorenz-Stommel}) initialized in the vicinity of the AMOC-on state. The Lorenz initial condition of one run is given by letting the final condition of the previous run evolve for $5$ Lorenz-timesteps. The first Lorenz initial condition is given by starting the system at $(y_1,y_2,y_3)=(0.1,0.1,25.1)$ and letting it relax to the attractor for $5$ Lorenz-timesteps. 
    The gray shadings show tipping for intermediate times. The time intervals corresponding to the different shading colors are given in Table \ref{tab:LSgrayShadingTimes} for each value of $\gamma$.
    }
    \label{fig:LSgamma_intermediate}
\end{figure}

\begin{table}
    \centering
    \begin{tabular}{c|c|c|c|c}
        $\gamma$& white    & lightgray & darkgray & black \\ \hline
        0.01   &[0,\, 0.011)&[0.011,\, 0.05)&[0.05,\, 150)&$[150,\, \infty)$ \\ \hline
        0.1    &[0,\, 0.061)&[0.061,\, 0.22)&[0.22,\, 150)&$[150,\, \infty)$ \\ \hline
        1 	   &[0,\, 0.125)&[0.125,\, 0.7)&[0.7,\, 150)&$[150,\, \infty)$ \\ \hline
        5     &[0,\, 1.25)&[1.25,\, 4.0)&[4.0,\, 150)&$[150,\, \infty)$ 
    \end{tabular} 
    \caption{Time intervals during which tipping was observed for each of the different gray shadings in Figure \ref{fig:LSgamma_intermediate}, as in Table~\ref{tab:DWgrayShadingTimes}.}
    \label{tab:LSgrayShadingTimes}
\end{table}

Note that the simulations for Figures \ref{fig:LSgamma_lims} and \ref{fig:LSgamma_intermediate} were performed using the same initial condition of the Stommel system for all values of $\eta_1$, and we assume that this point remains in the basin of attraction of the AMOC-on-state whenever the on-state exists.
We have checked, and this assumption seems reasonable, even at the saddle node bifurcation.

The gray shadings in Figure \ref{fig:LSgamma_intermediate} suggest that some scaling laws can be found that (for fixed $a$) connect the expected tipping time to the distance between $\eta_1$ and the start of the tipping window $\eta_{1-}$. This is discussed in \cite{MEHLING2024134043, Boerner2025global, grebogi1985super, Grebogi:1986} and it would be interesting to see how the proposed scaling laws can be further related to the tipping window and how they depend on the forcing strength $a$.

Upon increasing $a$, the effect of the Lorenz forcing plays an increasingly important role, and the size of the chaotic tipping window increases with increasing $a$. The smaller $a$, the better the linear response theory, discussed in Section \ref{sec:inf_ts_sep}, describes the system's dynamics. Thus, it seems intuitive that for larger $a$, the ordering of the line by their mean $y_1$-value for large $\gamma$ starts to break first, and only for smaller $\gamma$ it also breaks down for small values of $a$.

\section{Chaotic tipping with drift: the dynamic tipping window}
\label{sec:DynamicTippingWindow}

We turn back to the general case of additively chaotically forced one-dimensional bistable dynamics, as in system (\ref{eq:frozenSystf0}). But now we consider the case where the bifurcation parameter $\eta_r(t)$ is time-dependent:
\begin{equation}
	\begin{aligned}  \label{eq:fullSystAdd}
		\frac{dx}{dt} = & f(x) + \eta_r(t) + a \phi(y) \\
		\frac{dy}{dt} = &\gamma g(y).
	\end{aligned}
\end{equation}
We define
\begin{align} \label{eq:eta_of_t}
    \eta_r(t) &:= \eta_0 + r t
\end{align}
with $r>0$ and $\eta_0$ real constants. The assumptions on the unforced system
\begin{align} \label{eq:addunforcedSystRate}
    \frac{dx}{dt} = &\ f(x)+\eta_r(0)
\end{align}
are the same as in the time-independent case described after Equation (\ref{eq:addunforcedSyst}).
We assume that the dynamics given by Equation (\ref{eq:addunforcedSystRate}) has a bifurcation point $\eta^*$ that can cause B-tipping. Specifically, suppose that there are two attractors for $\eta^{\dag}<\eta_0<\eta^*$, only one attractor for $\eta_0>\eta^*$, and that one of the attractors disappears at a saddle-node bifurcation $\eta_0=\eta^*$. 
The $y$-dynamics are assumed to be chaotic.
In the following, we assume $\gamma=1$, the forcing strength $a>0$ small enough, $\eta_0<\eta^*$, and we initialize the $x$-dynamics close to the $x$ dynamics' attractor that ceases to exist at $\eta^*$. 

In the limit $r\rightarrow 0$, the behavior of \eqref{eq:fullSystAdd} limits to that of System \eqref{eq:Lorenz-Stommel} described in the previous section. For $r>0$, $\eta_r(t)$ increases over time, and we expect a combination of non-autonomous effects and the dynamics described by the tipping window. 
Figure \ref{fig:LSdynTipWin} shows this for the Lorenz-Stommel system with an additional parameter ramping of the form of Equation \eqref{eq:eta_of_t}. The larger the value of $r$, the larger the value of $\eta_r(t_c)$ that is recorded when the AMOC strength $\Psi$ crosses the tipping threshold $\Psi=0.1$ at time $t_c$. This shift to larger values of $\eta_r(t_c)$ is observed as the forced system responds with a delay compared to the increase in $\eta_r(t)$, and the relaxation to the AMOC-off state corresponding to the current value of $\eta_r(t)$ needs some time during which $\eta_r(t)$ still increases. 

This effect is called a {\it dynamic tipping window} in \cite{AshwinNewmanRoemer:2024} where it is noted that the $\eta$-interval depends on $r$ and $a$.
For small $a>0$ and $r\rightarrow 0$, the dynamic tipping window limits to the chaotic tipping window. For larger $r$ it shifts to larger values of $\eta$. For $r, a>0$ fixed and $a$ small enough, it is the smallest range of $\eta$ values for which the nonautonomous system \eqref{eq:fullSystAdd} undergoes tipping (i.e., the $x$ dynamics pass the tipping threshold (here $\Psi=0.1$)) in response to chaotic forcing typical with respect to some measure $s\in \cS(m)$. Note that the dynamic tipping window depends on the choice of the tipping threshold.

\subsection{Example: a ramped Lorenz-Stommel System}
\label{sec:DynamicTippingWindowLS}

We consider the two-dimensional Lorenz-Stommel system (\ref{eq:Lorenz-Stommel}), replacing the parameter $\eta_1$ with a time-dependent parameter $\eta_{r}(t)$ given by (\ref{eq:eta_of_t}) with $\eta_0=0.9$.
We initialize the $x$-dynamics in the vicinity of the AMOC-on state at $(x_1,x_2)=(1.7,\, 1.0)$, and then run ensembles for three different values of $r$, a range of values of $a\in [0.0,0.08]$, and the forcing always initialized by the final condition of the previous run. Thus, we approximate random forcing trajectories.
We run the system until time $t_c$ when $\Psi$ crosses the tipping threshold $\Psi=0.1$ and store $\eta_r(t_c)$ for each run.
The results are shown in Figure~\ref{fig:LSdynTipWin}. As expected, we observe that for increasing $r$, the dynamic tipping window shifts to larger $\eta_1$ for all $a$. 

\begin{figure}
    \centering
    
    \includegraphics[align=t,width=10cm]{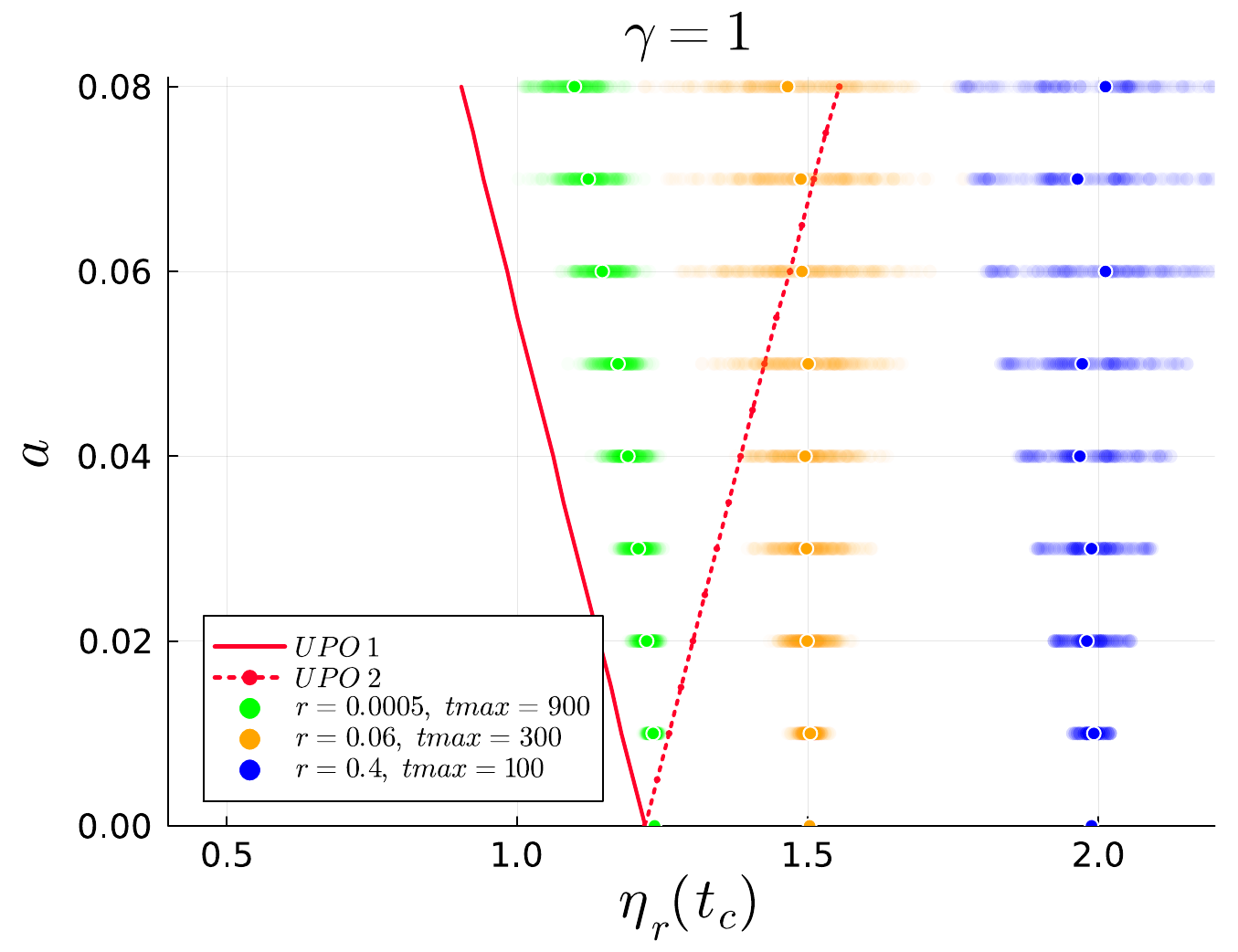}  

    \caption{The dynamic tipping window: we initialize the $x$-dynamics in the basin of the AMOC-on state at $(x_1,x_2)=(1.7,\, 1.0)$, and then do ensembles of runs for three different values of $r$, a range of values of $a\in [0.0,0.08]$, and the forcing always initialized by the final condition of the previous run. Thus, we approximate random forcing trajectories.
    We run the system until time $t_c$ when the system crosses the tipping threshold $\Psi=0.1$ and store $\eta_r(t_c)$ for each run. For each run, we plot the location of $\eta_r(t_c)$, and for each line of constant $a$, we show the median tipping location for all three choices of $r$.
    The points are plotted slightly transparent such that one can see the distribution of $\eta_r(t_c)$.}
    \label{fig:LSdynTipWin}
\end{figure}

\subsection{Phase Dependence of UPO forcing for time-dependent $\eta_1(t)$}

Instead of forcing the Stommel system with chaotic trajectories, we could have forced it with the UPOs. In this case the value of $\eta_r(t_c)$ would depend on the initial condition chosen on each UPO. Thus, we would get a range of $\eta_r(t_c)$ for each UPO and forcing strength $a$. However, in the limits of infinite timescale separation, these ranges should converge to single values of $\eta_r(t_c)$.

This phenomenon is not to be confused with the {\it phase tipping} introduced in \cite{alkhayuon2021phase}, where a tipping event happens because of a combination of the forcing shifting at a certain rate and the system being in a certain phase with respect to its limit cycle. In our setting, one of the system's attractors will lose stability anyway for sufficiently large $t$ as the chaotic forcing is bounded and $a$ is finite, and even an unusually stabilizing forcing trajectory could not keep both attractors of the $x$-dynamics stable for arbitrarily large $\eta_r(t)$. Thus, the forcing only determines when the system tips, and the (de)stabilization of the system should be rather interpreted as an advancing or delaying of the bifurcation.

\section{Discussion}
\label{sec:discuss}

In this work, we provide an understanding of the chaotic tipping window for chaotically forced ODEs that can exhibit bistability for certain parameter values. This extends \cite{AshwinNewmanRoemer:2024} and demonstrates that in the forced ODE system the timescale separation constant $\gamma$ is vital to understand the relative importance of the mean (for rapid forcing) or extremes (for slow forcing) of the forcing, not just its amplitude. We show that the timescale ratio between the forcing and responding systems strongly affects the forcing behavior that determines the boundaries of the tipping window; for forcing at similar timescales to the bistable system we expect quite complex behavior is possible.

The tipping window encodes information about UPOs on the chaotic forcing attractor and about the response system. For example, most of the considered UPOs on the Lorenz attractor in Figure~\ref{fig:UPOs} have similar extrema while their means can vary significantly, which is reflected in the different sizes and structures of the tipping window in the different limits of timescale separation. 
We recall that finding the endpoints of the tipping window can be seen as a non-linear version of ergodic optimisation that may be interesting to understand at a more systematic level.
Some work has been done to understand bifurcations with bounded noise in terms of set-valued dynamics \cite{lamb2015topological,kuehn2018early}, but it is not clear that this will more easily give information about the tipping windows.
We note that we do not give results of relevance to the homogenization limit \cite{Kelly:2017, Deser2025chaotic} as this limit requires that $a\approx \sqrt{\gamma}$ and takes the limit $\gamma\rightarrow\infty$; the large forcing amplitude means that the tipping window becomes trivial in this limit.

In the presence of bounded noise and slow drift of a parameter, we show that, not surprisingly, the tipping window becomes dynamic with a delay related to the parameter's drift rate. We did not consider the interplay between the rate of drift and the timescales of forcing and response, but presumably, these will all be important, especially when they are comparable. We also did not consider, for example, the cumulative distribution of tipping times for a drift through a chaotic tipping window, except that some features of this can be inferred from Figure \ref{fig:DWgamma_intermediate} - the shading gives an idea of the time of tipping when parameters reside in these regions.
It will be an interesting and potentially useful exercise to quantify this more. As in \cite{AshwinNewmanRoemer:2024}, we expect that an extremely slow drift will be needed for tipping to be likely close to the left boundary of the chaotic tipping window.
For more complex real-world cases where there is a combination of chaotic and slow deterministic forcing there may be a complex interplay of the forcing and response.

\section*{Acknowledgements}

PA received funding from the European Union’s Horizon 2020 research and innovation programme under grant agreement No. 101137601 (ClimTip). RR received funding from the European Union’s Horizon 2020 Research and Innovation Programme under the Marie Skłodowska-Curie Grant Agreement No. 956170 (CriticalEarth).
Note that for the purpose of open access, we have applied a Creative Commons Attribution (CC BY) license to any Author Accepted Manuscript version arising from this submission.

\section*{Data availability}

The Julia code to perform the simulations and generate the figures in this paper is available from 
{\tt https://github.com/raphael-roemer/tipping-windows-and-timescales-2025}.

\bibliographystyle{plain}
\bibliography{nonautrefs}

\clearpage
\newpage

\appendix

\section{Computation of unstable periodic orbits}
\label{sec:PO_computation}

We compute the UPOs using the method from \cite{davidchack1999efficient, Eckmann_1987, Auerbach} as follows. Consider the recurrence matrix $M_{rec}$ of a long trajectory of the Lorenz-63 system defined by:
$$
	M_{rec}(ij) = \begin{cases}
				1, & \text{if } |{\bf y}(t_i)- {\bf y}(t_j)| \leq \varepsilon \\
            			0, & \text{otherwise.}
			\end{cases}
$$
An example is shown in Figure \ref{fig:M_rec}. 
Then, from each stripe in the recurrence matrix, we choose the candidate $M_{rec}(ij)$ with the smallest recurrence distance $|{\bf y}(t_i)- {\bf y}(t_j)|$. This provides a list of approximated periodic orbits.

Many of the identified UPOs can now either be directly identified with each other, or they are identical after a rotation (i.e. they are identical if one applies the transformation $(y_1,y_2,y_3) \mapsto (-y_1,-y_2,y_3)$) or they are similar in the sense that a UPO is run through several times and thus has a length that is an integer multiple of its actual length. Thus, we identify many of the previously collected candidates for periodic orbits with each other.
We illustrate a number of the resulting UPOs computed for the Lorenz attractor in Figure~\ref{fig:UPOs} and list some properties of these UPOs in Table~\ref{tab:POs}.

\begin{figure}
    \centering
    \includegraphics[align=t,width=12cm]{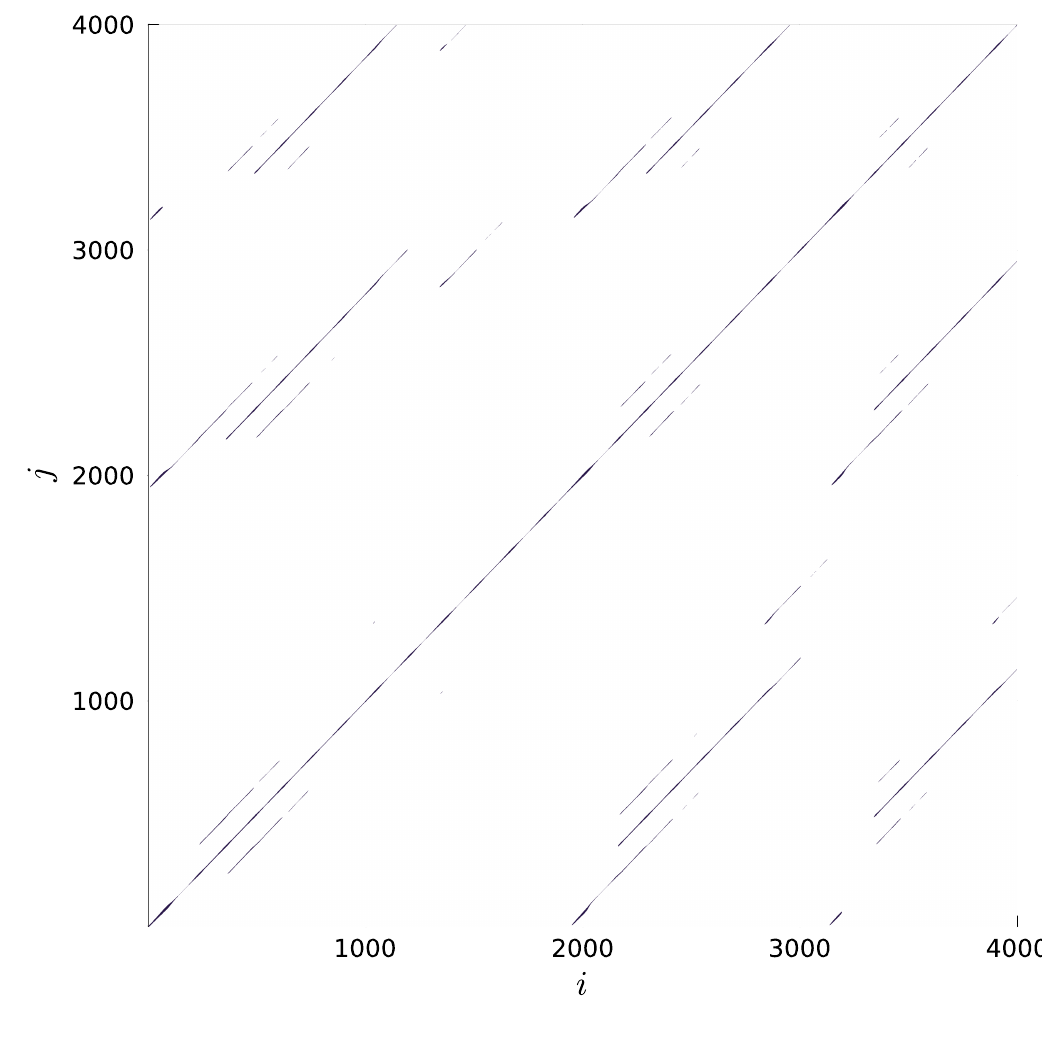}  

    \caption{The first 4000 timesteps of the recurrence matrix $M_{rec}(i,j)$ for a numerically integrated trajectory of the Lorenz system (\ref{eq:Lorenzz}) starting at $(y_1,y_2,y_3)=(0.05,0.05,25.05)$. The $i$ and $j$ axes refer to timestep number $t_i$ and $t_j$. Black indicates a recurrence where $|{\bf y}(t_i)-{\bf y}(t_j)|\leq \varepsilon = 0.9$. In each of the black stripes, we search for a pair $(i,e)$ with closest recurrence and approximate orbit closure and use this as the candidate for a point on the UPO.}
    \label{fig:M_rec}
\end{figure}

\begin{figure}
    \centering

    \includegraphics[align=t,width=13cm]{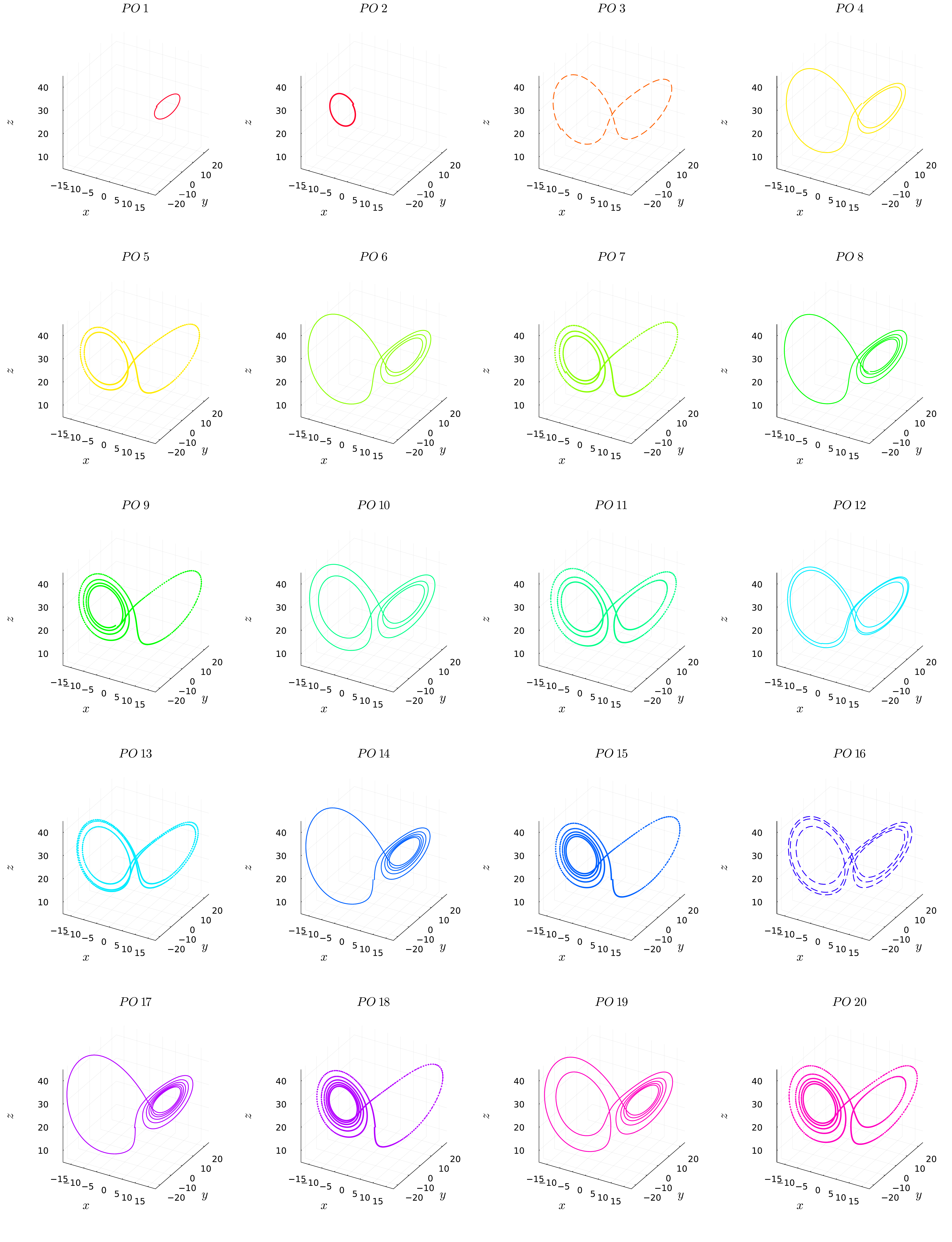}  

    \caption{Numerical approximations of a set of low period UPOs of the Lorenz system within the Lorenz attractor, computed using the method of recurrences from a long chaotic trajectory on the Lorenz attractor. Dashed lines show axis-symmetric orbits. Orbits with the same color are symmetrically related.}
    \label{fig:UPOs}
\end{figure}

\begin{table}
    \centering
    \begin{tabular}{c|c|c|c|c}
        Name  &  $T$   & $y_{1,max}$	& $y_{1,mean}$ & symmetric \\ \hline
        UPO1  	&  0.640  	&  11.751	&  8.050     &   	 \\
        UPO2  	&  0.640   	&  -5.135  	&  -8.050    &   	 \\
        UPO3 	&  1.545  	&  15.661  	&  -0.073     & Y  \\
        UPO4 	&  2.285  	&  14.601  	&  2.301  	 &  	 \\
        UPO5  	&  2.285  	&  16.532	&  -2.301  	 &   	 \\
        UPO6  	&  3.020 	&  14.947  	&  3.456 	 &   	 \\
        UPO7 	&  3.020  	&  17.047  	&  -3.456	 &   	 \\
        UPO8  	&  3.700  	&  14.957	&  4.177  	 &       \\
        UPO9  	&  3.700  	&  17.000 	&  -4.177 	 &   	 \\
        UPO10 	&  3.825  	&  15.945  	&  1.361	 &   	 \\
        UPO11  &  3.825  	&  16.719	&  -1.361  	 &   	 \\
        UPO12	&  3.865  	&  15.430  	&  1.325 	 &   	 \\
        UPO13 	&  3.865  	&  16.231  	&  -1.325	 &       \\
        UPO14 	&  4.410  	&  14.757  	&  4.701  	 &  	 \\
        UPO15  &  4.410  	&  17.602	&  -4.701  	 &   	 \\
        UPO16  &  4.635  	&  16.188  	&  -0.001 	 & Y  \\
        UPO17 	&  5.110  	&  14.957  	&  5.051	 &       \\
        UPO18  &  5.110  	&  17.828	&  -5.051  	 &       \\
        UPO19  &  5.235  	&  15.864  	&  2.970 	 &   	 \\
        UPO20 	&  5.235  	&  17.440  	&  -2.970	 &  
    \end{tabular} 
    \caption{Approximate periods $T$, maximum $y_{1,max}$ and  mean $y_{1,mean}$ of the first component of the numerically determined UPOs $({\bf y}(t))$ shown in Figure \ref{fig:UPOs}. Note that the symmetric orbits have a mean that is approximately $0$.}
    \label{tab:POs}
\end{table}

\end{document}